\documentclass{LMCS}

\def\dOi{10(3:3)2014}
\lmcsheading%
{\dOi}
{1--13}
{}
{}
{Sep.~12, 2013}
{Aug.~15, 2014}
{}

\ACMCCS{[{\bf Theory of computation}]: Computational complexity and cryptography; Logic}

	\subjclass{F.1.3 Complexity Measures and Classes, F.4.1 Mathematical Logic}

	\keywords{Dependence logic, Horn-formulae, computational complexity, descriptive complexity}


\usepackage{amsmath, amssymb, amsthm}
\usepackage{xspace}
\usepackage{xcolor}
\usepackage{needspace}
\usepackage{hyperref}
\usepackage{ifthen}
\usepackage{fancyhdr}
\usepackage{tikz}
  \usetikzlibrary{arrows,automata}
 \usepackage{listings}
  
  \lstdefinelanguage{pseudo}{
    morekeywords={if,elseif,then,return,end,choose,guess,when,for,foreach,case},
    morekeywords=[3]{false,true,and,or,not},
    morecomment=[l]{//}
  }
  \lstset{
    mathescape,
    language=pseudo,
    basicstyle=\small,
    keywordstyle=\bfseries,
    keywordstyle=[3]\ttfamily,
    texcl,
    commentstyle=\footnotesize\itshape
  }
\usepackage{microtype}

\bibliographystyle{plain}


\newcommand{\comment}[2][]{{\color{blue}\scriptsize+++\ifthenelse{\equal{}{#1}}{}{#1:\ }#2---------}\marginpar{{\color{blue}$\bullet$}}}

\newcommand{\complexityClassFont}[1]{\mathrm{#1}} 
\newcommand{\logicClFont}[1]{\mathrm{#1}}        
\newcommand{\problemFont}[1]{\textsf{#1}}         
\newcommand{\mathCommandFont}[1]{\mathrm{#1}}     


\newcommand{\enc}[1]{\ensuremath\langle#1\rangle}

\newcommand{\CLASS}{\protect\ensuremath{\mathcal{C}}\xspace}

\newcommand{\PTIME}{\protect\ensuremath{\complexityClassFont{P}}\xspace}
\newcommand{\NP}{\protect\ensuremath{\complexityClassFont{NP}}\xspace}

\newcommand{\SO}{\protect\ensuremath{\logicClFont{SO}}\xspace}
\newcommand{\ESO}{\protect\ensuremath{\logicClFont{SO \exists}}\xspace}
\newcommand{\ESOHORN}{\protect\ensuremath{\logicClFont{SO \exists}}\textnormal{-Horn}\xspace}
\newcommand{\SOHORN}{\protect\ensuremath{\logicClFont{SO}\textnormal{-Horn}}\xspace}
\newcommand{\FO}{\protect\ensuremath{\logicClFont{FO}}\xspace}
\newcommand{\D}{\protect\ensuremath{\logicClFont{D}}\xspace}
\newcommand{\DHORN}{\protect\ensuremath{\logicClFont{D}\textnormal{-Horn}}\xspace}

\newcommand{\BD}{\protect\ensuremath{\logicClFont{D}^*}\xspace}
\newcommand{\BDHORN}{\protect\ensuremath{\BD\textnormal{-Horn}}\xspace}
\newcommand{\SDHORN}{\BDHORN}



\newcommand{\dep}[1][\cdot]{{\mathCommandFont{dep}\ifthenelse{\equal{#1}{}}{}{(\nobreak#1\nobreak)}}}

\newcommand{\LOGIC}{\ensuremath{\mathcal{L}}\xspace}
\newcommand{\MDL}[1][]{\ensuremath{\logicClFont{MDL}\ifthenelse{\equal{#1}{}}{}{(\allowbreak#1)}}\xspace}
\newcommand{\CTL}[1][]{\ensuremath{\logicClFont{CTL}\ifthenelse{\equal{#1}{}}{}{(#1)}}\xspace}
\newcommand{\LTL}[1][]{\ensuremath{\logicClFont{LTL}\ifthenelse{\equal{#1}{}}{}{(#1)}}\xspace}
\newcommand{\CTLs}[1][]{\ensuremath{\logicClFont{\CTL^*}\ifthenelse{\equal{#1}{}}{}{(#1)}}\xspace}

\newcommand{\MDLk}[1][]{\ensuremath{\logicClFont{MDL_k}\ifthenelse{\equal{#1}{}}{}{(\allowbreak#1)}}\xspace}

\newcommand{\DOMSET}{\protect\ensuremath{\problemFont{Dominating-Set}}\xspace}

\newcommand{\MDLSAT}[1][]{\ensuremath{\problemFont{MDL}\text{-}}\allowbreak\ensuremath{\problemFont{SAT}\ifthenelse{\equal{#1}{}}{}{(#1)}}\xspace}
\newcommand{\MDLMC}[1][]{\ensuremath{\problemFont{MDL}\text{-}}\allowbreak\ensuremath{\problemFont{MC}\ifthenelse{\equal{#1}{}}{}{(#1)}}\xspace}
\newcommand{\MDLMCparas}[3][]{\ensuremath{\problemFont{MDL\ifthenelse{\equal{#2}{}}{}{_{\mathnormal{#2}}}}\text{-}}\allowbreak\ensuremath{\problemFont{MC\ifthenelse{\equal{#3}{}}{}{_{\mathnormal{#3}}}}\ifthenelse{\equal{#1}{}}{}{(#1)}}\xspace}


\newcommand{\set}[3][]{\protect\ensuremath{\left\{#2\;\middle|\;\ifthenelse{\equal{#1}{}}{\text{#3}}{\parbox{#1}{#3}}\right\}}}

\newcommand{\dom}{{\rm Dom}} 
\newcommand{\Var}{{\rm Var}}
\newcommand{\Fr}{{\rm Fr}}
\newcommand{\mA}{{\mathfrak A}}
\newcommand{\rel}{\mathrm{rel}}


\theoremstyle{plain}
\theoremstyle{plain}
\theoremstyle{plain}
\theoremstyle{plain}

\begin{document}

	\title{A Fragment of Dependence Logic Capturing Polynomial Time}

	\author[J.~Ebbing]{Johannes Ebbing\rsuper a}	
	\address{{\lsuper{a,c,d}}Leibniz Universit\"at Hannover, Theoretical Computer Science, Appelstr.~4, 30167~Hannover, Germany.}	
	\email{\{ebbing,mueller,vollmer\}@thi.uni-hannover.de}  

	\author[J.~Kontinen]{Juha Kontinen\rsuper b}	
	\address{{\lsuper b}University of Helsinki, Department of Mathematics and Statistics, P.O. Box 68, 00014 Helsinki, Finland.}	
	\email{juha.kontinen@helsinki.fi}  

	\author[J.-S.~M\"uller]{Julian-Steffen M\"uller\rsuper c}	
	\address{\vspace{-18 pt}}	
	
	\author[H.~Vollmer]{Heribert Vollmer\rsuper d}	
	\address{\vspace{-18 pt}}	

	\thanks{{\lsuper{a,b,c,d}}This paper was supported by a grant from DAAD within the PPP programme, by DFG grant VO 630/6-2 and grants 264917 and 275241  of the Academy of Finland.}

\begin{abstract}
	In this paper we study the expressive power of Horn-formulae in dependence logic and show that they can express NP-complete problems.
	Therefore we define an even smaller fragment $\SDHORN$ and show that over finite successor structures it captures the complexity class \PTIME of all sets decidable in polynomial time. Furthermore, we show that 
  the open $\SDHORN$-formulae correspond to the negative fragment of $\ESOHORN$.

\end{abstract}

\maketitle

\section{Introduction}

Dependence logic, \D,  extends first-order logic by  dependence atomic formulae
\begin{equation}\label{da}\dep[t_1,\ldots,t_n]
\end{equation} the  meaning of which is that the value of the term $t_n$ is functionally determined by the values of $t_1,\ldots, t_{n-1}$. The semantics of $\D$ is defined in terms of sets of assignments (teams) instead of single assignments as in first-order logic.
While in first-order logic the order of quantifiers solely determines the dependence relations between variables, in dependence logic more general dependencies between variables can be expressed. Historically dependence logic was preceded by  partially ordered quantifiers (Henkin quantifiers) of Henkin  \cite{he61}, and Independence-Friendly (IF) logic of Hintikka and Sandu \cite{hisa89}. It is known that both IF logic and dependence logic are equivalent to existential second-order logic $\ESO$ in expressive power.
From the point of view of descriptive complexity theory, this means that dependence logic captures the class $\NP$.

The framework of dependence logic has turned out be flexible to allow  
interesting generalizations. For example, the extensions of dependence logic in terms of so-called intuitionistic implication and linear implication was  introduced in \cite{AbVa}. In \cite{yaar} it was shown that extending $\D$ by the intuitionistic implication makes the logic equivalent to full second-order logic $\SO$. On the other hand, in  \cite{arebkovo11}
the extension of \D by a majority quantifier was defined and shown to capture the Counting Hierarchy. Furthermore, new variants of the dependence atomic formulae have also  been introduced in \cite{vagr10}, \cite{Galliani12},  and \cite{en11}. 

In this paper we study certain  fragments of dependence logic. While it is known that \D captures the class $\NP$, the complexity of various syntactic fragments of \D are not yet fully understood.  Some work has been done in this direction:
\begin{itemize}

\item All sentences of \D  can be transformed to a form  
\begin{equation}\label{D-NF}
		\forall\overline{x}\exists\overline{y}(\bigwedge_i \dep[\overline{z}_i,w_i] \wedge \psi),
 \end{equation}
where $\psi$ is quantifier-free first-order formula \cite{va07}.

\item In formulae of form (\ref{D-NF}), the use of those variables depending on others can even further be restricted; in a sense, only Boolean information in form of equality tests is needed. We will introduce this fragment \BD formally and show that it is as expressive as \D.

\item  The fragments of \D defined either by restricting the number of universal quantifiers or the arity of  dependence atoms in sentences were mapped to the corresponding sublogics of \ESO in \cite{ADJK}. Making use of the well-known time hierarchy theorem this implies a strict hierarchy of fragments within \D.

\item The existential sentences of \D collapse to \FO \cite{{ADJK}}, whereas the universal sentences can define NP-complete problems. 
\end{itemize}

\noindent The last remark above follows from the result of \cite{2010Jarmo} 
showing that  the question of deciding whether a team $X$ satisfies $\phi$, where  
\[ \phi= \dep[x,y]\vee\dep[u,v]\vee \dep[u,v]\]
is $\NP$-complete and from  the observation that $\phi$ can be translated to an equivalent  universal sentence of \D (see the proof of Lemma \ref{lemma1}). 

In this paper our main objects of study are Horn fragments of \D. In analogy to (\ref{D-NF}) we first define \DHORN to be the set of formulae of the form:
\[
		\phi = \forall\overline{x}\exists\overline{y}(\bigwedge_i \dep[\overline{z}_i,w_i] \wedge \bigwedge_j C_j),
	\]
where the $C_j$ are clauses, i.e., disjunctions of atomic and negated atomic \FO-formulae, that contain at most one positive formula with an occurrence of an existentially quantified variable.
While we will show that this fragment essentially is as expressive as full dependence logic (i.e., it can express NP-complete problems), 
we will prove that a slightly more restricted fragment, denoted by \SDHORN and obtained from \DHORN in exactly the same way as \BD is obtained from \D,
corresponds over finite successor structures  to the class of second-order Horn formulae which, by a famous result by Gr\"adel \cite{gr92}, are known to capture \PTIME over finite successor structures. 
The result of \cite{gr92} thus allows us to conclude that the sentences of \SDHORN  also capture \PTIME. 
An interesting question is whether $\SDHORN$ can be somehow extended to approach the major open question of descriptive complexity, whether there is a logic for \PTIME properties of structures in the absence of a built-in ordering relation. 
We also consider the complexity of \SDHORN formulae with free variables and show that, over successor structures, the open $\SDHORN$-formulae correspond exactly to the negative fragment of $\ESOHORN$. This result is analogous to a result of \cite{kova09}, who showed that 
the open $\D$-formulae correspond exactly to the downward closed NP-properties.

We would like to point out that very recently, independent work of Galliani and Hella \cite{gahe13} obtained a characterization of \PTIME in terms of inclusion logic, a variant of dependence logic where instead of dependence atoms, so called inclusion atoms are used.





This article is organized as follows. In the next section, we introduce dependence logic and some of its basic properties. We also recall Gr\"adel's characterization of \PTIME in terms of second-order Horn logic. In Sect.~\ref{sect:DHorn} we define our fragments of dependence logic, the Horn fragment and the strict Horn fragment. In Sect.~\ref{sect:results} we present our characterization of \PTIME, and in  Sect.~\ref{case_open_formulas}, we consider the open formulae of \SDHORN.

\section{Preliminaries}\label{sect:prelim}


\subsection{Dependence Logic}

In this section we will define the semantics of dependence logic. 
Dependence logic (\D) extends first order logic by new atomic formulae expressing dependencies between variables. 

\begin{defi}[\cite{va07}]
Let $\tau$ be a vocabulary. The set of $\tau$-formulae of dependence logic ($\D[\tau]$) is defined by extending the set of $\tau$-formulae of first order logic ($\FO[\tau]$) by dependence atoms of the form
\begin{align}\label{dep_atom}
\dep[t_1,\dots,t_n],
\end{align}
where $t_1,\dots,t_n$ are terms. 
\end{defi}

In this paper, we only consider formulae in negation normal form; this means that negation occurs only in front of atomic formulae. 

\begin{defi}
Let $\varphi$ be a dependence logic formula. We define the set $\Fr(\varphi)$ of free variables occurring in $\varphi$ as in first order logic with the additional rule
\[
\Fr(\dep[t_1,\dots,t_n]) = \Var(t_1) \cup \dots \cup \Var(t_n),
\]
where $\Var(t_i)$ is the set of variables occurring in $t_i$. A formula $\varphi$ with $\Var(\varphi) = \emptyset$ is called a sentence.
\end{defi}

Now we define team semantics for dependence logic. Satisfaction for dependence logic formulae  will be defined with respect to  \emph{teams} which are \emph{sets of assignments}. Formally, teams are defined as follows.

\begin{defi}
Let $A$ be a set and $\{x_1,\dots,x_n\}$ be a set of variables. 
\begin{itemize}
\item Then a \emph{team} $X$ over $A$ is a set of assignments $s\colon \{x_1,\dots,x_n\} \to A$. We refer to $\{x_1,\dots,x_n\}$ as the \emph{domain} and to $A$ as the \emph{co-domain} of $X$.
\item The relation $\rel(X)$ over $A^n$ corresponding to $X$ is defined as follows
\[
\rel(X) = \{(s(x_1),\dots,s(x_n))\mid s \in X\}
\]
\item Let $F\colon X \to A$ be a function, then we define
\begin{eqnarray*}
X(F/x)&=&\{s(F(s)/x) : s\in X \}\\ 
 X(A/x)&=&\{s (a/x): s\in X\ \textrm{and}\ a\in A \}.
\end{eqnarray*}
\end{itemize}
\end{defi}\medskip

\noindent We are now able to define team semantics. In the following definition, $t^\mA\langle s\rangle$ for a term $t$ and an assignment $s$ denotes the value of $t$ under $s$ in structure $\mA$.

\begin{defi}(\cite{va07})\label{def:semantics}
Let $\mA$ be a model and $X$ a team of $A$. Then we define the relation $\mA \models_{X} \varphi$ as follows:
\begin{itemize}
\item If $\varphi$ is a first-order literal, then $\mA\models _X\varphi$ iff for all $s\in X$ we have  $\mA\models_s\varphi$, where $\models_{s}$ refers to satisfaction in first-order logic.
\item $\mA\models _X \dep[t_{1},\ldots,t_{n}]$ iff for all $s,s'\in
X$ such that $t_1^{\mA}\langle s\rangle  =t_1^{\mA}\langle
s'\rangle  ,\ldots, t_{n-1}^{\mA}\langle s\rangle
=t_{n-1}^{\mA}\langle s'\rangle  $, we have $t_n^{\mA}\langle
s\rangle  =t_n^{\mA}\langle s'\rangle  $.
\item  $\mA \models _X \neg \dep[t_{1},\ldots,t_{n}]$ iff
$X=\emptyset$.
\item $\mA\models _X \psi \wedge \varphi$ iff $\mA\models _X \psi$ and $\mA\models _X \varphi$.
\item $\mA\models _X \psi \vee \varphi$ iff $X=Y\cup Z$ such that $\mA\models _Y \psi$  and $\mA\models _Z \varphi$ .

\item   $\mA \models _X \exists x\psi$ iff $\mA \models _{X(F/x)} \psi$ for some $F\colon X\to A$.

\item $\mA \models _X \forall x\psi$ iff $\mA \models _{X(A/x)} \psi$.
\end{itemize}
Above, we assume that the domain of $X$ contains the variables free in $\varphi$. Finally, a sentence $\varphi$ is true in a model $\mA$  (in symbols: $\mA\models \varphi$)  if $\mA\models _{\{\emptyset\}} \varphi$.
\end{defi}

Let us then  recall some basic properties of dependence logic that will be needed later. The following result shows that
the truth of a \D-formula depends only on the interpretations of variables occurring free in the formula. Below,  for $V\subseteq \dom(X)$, $X\upharpoonright V$ is defined by
 \[X\upharpoonright V = \{s\upharpoonright V \mid s\in X\}.\]
\begin{thm}[\cite{va07}]\label{freevar}
Suppose $V\supseteq \Fr(\phi)$. Then $\mA \models _X\phi$ if and only if $\mA \models _{X\upharpoonright V} \phi$.
\end{thm}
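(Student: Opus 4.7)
I would prove this by structural induction on $\phi$, with $V \supseteq \Fr(\phi)$ fixed. Since the semantics of literals and of dependence atoms (\ref{dep_atom}) depend only on the values of their terms, whose variables lie in $\Fr(\phi) \subseteq V$, the base cases amount to the observation that for each $s \in X$ the restriction $s \upharpoonright V$ agrees with $s$ on every relevant variable; the case of a negated dependence atom reduces to the trivial equivalence $X = \emptyset \Leftrightarrow X \upharpoonright V = \emptyset$. Conjunction follows immediately from the induction hypothesis applied to each conjunct, since the free variables of each conjunct are contained in $\Fr(\phi) \subseteq V$.

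For disjunction $\psi_1 \vee \psi_2$, the forward direction is straightforward: any splitting $X = Y \cup Z$ induces $X \upharpoonright V = (Y \upharpoonright V) \cup (Z \upharpoonright V)$, to which I apply the induction hypothesis. For the converse, given a splitting $X \upharpoonright V = Y' \cup Z'$, I lift it to $X$ by setting $Y := \{s \in X : s \upharpoonright V \in Y'\}$ and $Z := \{s \in X : s \upharpoonright V \in Z'\}$, so $Y \upharpoonright V = Y'$, $Z \upharpoonright V = Z'$, and the induction hypothesis again applies. The universal case is the cleanest: a direct computation shows that $X(A/x) \upharpoonright (V \cup \{x\}) = (X \upharpoonright V)(A/x)$, and since $V \cup \{x\} \supseteq \Fr(\psi)$, the induction hypothesis yields the equivalence.

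The main obstacle is the existential quantifier $\exists x \, \psi$. The backward direction is easy: given a witness $F' \colon X \upharpoonright V \to A$, define $F \colon X \to A$ by $F(s) := F'(s \upharpoonright V)$; then one checks $X(F/x) \upharpoonright (V \cup \{x\}) = (X \upharpoonright V)(F'/x)$, and the induction hypothesis concludes. The forward direction is harder because a given witness $F \colon X \to A$ need not be constant on the fibres of $s \mapsto s \upharpoonright V$, so the two teams above need not coincide: $X(F/x) \upharpoonright (V \cup \{x\})$ may contain several extensions of one $r \in X \upharpoonright V$ by different values of $x$. To handle this I would choose, for each $r \in X \upharpoonright V$, a representative $s_r \in X$ with $s_r \upharpoonright V = r$, and set $F'(r) := F(s_r)$. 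One then verifies the inclusion $(X \upharpoonright V)(F'/x) \subseteq X(F/x) \upharpoonright (V \cup \{x\})$. The induction hypothesis, applied to $\psi$ and the team $X(F/x)$ with $V \cup \{x\} \supseteq \Fr(\psi)$, shows that the larger team satisfies $\psi$, and the standard downward closure of dependence logic (which is itself provable by a parallel induction on $\phi$, and which I would either cite or prove as an auxiliary lemma) then transfers satisfaction to the smaller team $(X \upharpoonright V)(F'/x)$, yielding $\mA \models_{X \upharpoonright V} \exists x \, \psi$ and closing the induction.
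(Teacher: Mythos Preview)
The paper does not supply its own proof of this result; it is cited as background from V\"a\"an\"anen~\cite{va07}. Your structural induction is the standard argument and is correct. The subtlety you isolate in the forward direction of the existential case---that a witness $F\colon X\to A$ need not be constant on the fibres of $s\mapsto s\upharpoonright V$, so that $X(F/x)\upharpoonright(V\cup\{x\})$ may fail to have the form $(X\upharpoonright V)(F'/x)$---is genuine under the strict (single-valued) semantics of Definition~\ref{def:semantics}, and resolving it by choosing representatives and then invoking downward closure is a legitimate route. There is no circularity: Theorem~\ref{Downward closure} admits an independent structural induction that nowhere relies on the present locality lemma (in the existential step one simply restricts the witness $F$ to the smaller team), so it may be appealed to here or, as you note, proved in parallel. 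One small technical point worth making explicit is that you may assume the bound variable $x$ does not lie in $V$ (by $\alpha$-renaming if necessary), which is what makes identities such as $X(A/x)\upharpoonright(V\cup\{x\})=(X\upharpoonright V)(A/x)$ literally true.
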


All formulae of dependence logic also satisfy the following strong monotonicity property called  \emph{Downward Closure}.
\begin{thm}[\cite{va07}]\label{Downward closure}Let $\phi$ be a formula of dependence logic, $\mA$  a model, and $Y\subseteq X$ teams. Then $\mA\models_X \phi$ implies $\mA\models_Y\phi$.
\end{thm}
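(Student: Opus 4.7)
The plan is to prove Downward Closure by structural induction on $\phi$, which is the canonical approach for statements about team semantics. The induction hypothesis is: for every pair of teams $Y\subseteq X$ (whose common domain includes $\Fr(\phi)$), if $\mA\models_X\phi$ then $\mA\models_Y\phi$. I would handle all the atomic and connective cases as in the semantic clauses of Definition~\ref{def:semantics}.

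For the base cases, first-order literals are immediate: $\mA\models_X\phi$ says $\mA\models_s\phi$ for all $s\in X$, which restricts trivially to $Y$. For a dependence atom $\dep[t_1,\ldots,t_n]$, the condition is universally quantified over pairs $s,s'\in X$, so passing to $Y\subseteq X$ shrinks the quantification range and preserves validity. For a negated dependence atom, satisfaction forces $X=\emptyset$, whence $Y=\emptyset$ as well. The conjunction case is a direct application of the induction hypothesis to both conjuncts.

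The two cases that actually require thought are disjunction and existential quantification, since both involve choosing witnesses at $X$ that must be adapted to $Y$. For $\psi\lOr\varphi$, a splitting $X=X_1\cup X_2$ witnessing satisfaction induces the natural splitting $Y=(Y\cap X_1)\cup(Y\cap X_2)$; since $Y\cap X_i\subseteq X_i$, the induction hypothesis delivers $\mA\models_{Y\cap X_i}$ of the respective disjunct. For $\exists x\,\psi$, given a witnessing function $F\colon X\to A$, set $F'=F\!\upharpoonright\! Y$; then $Y(F'/x)\subseteq X(F/x)$ and the induction hypothesis yields $\mA\models_{Y(F'/x)}\psi$. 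The universal case uses that $Y(A/x)\subseteq X(A/x)$, so the induction hypothesis applies directly.

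I expect no genuinely hard obstacle here, since downward closure is exactly the design feature that fails for, say, the dual of the dependence atom. The only subtle points are (i) verifying that the set-theoretic manipulations of witnessing splittings and Skolem-type functions produce teams that are indeed subsets of the corresponding witnesses for $X$, and (ii) keeping track, via Theorem~\ref{freevar}, that the restricted teams still have a domain containing the free variables of the relevant subformula, so that the induction hypothesis is applicable in the form stated.
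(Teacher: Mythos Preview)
The paper does not give its own proof of this theorem; it is stated with a citation to \cite{va07} and taken as a known background result. Your structural induction argument is correct and is precisely the standard proof found in V\"a\"an\"anen's book, so there is nothing to compare.
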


Finally, we note that  dependence logic is a conservative extension of first-order logic.
\begin{defi} A formula $\phi$ of \D is called a first-order formula if it does not contain dependence atomic formulae as subformulae.  \end{defi}

First-order formulae of dependence logic satisfies the so-called \textit{flatness} property:
\begin{thm}[\cite{va07}]\label{FO} Let $\phi$ be a first-order formula of dependence logic. Then for all $\mA$ and $X$ it holds that
$\mA\models _X\phi \textrm{ if and only if for all }\allowbreak s\in X \textrm{ we have } \mA\models_s\phi$.
\end{thm}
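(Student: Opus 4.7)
The plan is a straightforward structural induction on $\phi$, exploiting the fact that no dependence atoms appear as subformulae, so every clause of Definition \ref{def:semantics} to be considered already refers to first-order literals or to operations on teams that decompose cleanly over single assignments. I would first fix notation: write $\mA \models_X \phi$ as the team-semantic relation and $\mA \models_s \phi$ as ordinary Tarski satisfaction, and aim to prove both directions simultaneously by induction on the complexity of $\phi$.

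The base case is immediate from the very first clause of Definition \ref{def:semantics}, which defines $\mA\models_X \phi$ for first-order literals $\phi$ by quantifying over all $s \in X$. The conjunction case is trivial: both sides distribute over $\wedge$ using only the induction hypothesis. The quantifier cases are also routine. For $\forall x\,\psi$, one uses $\mA\models_{X(A/x)} \psi$, applies the inductive hypothesis to move to single assignments in $X(A/x)$, and observes that these are precisely the extensions $s(a/x)$ with $s\in X$ and $a\in A$; conversely, gathering these extensions reconstructs $X(A/x)$. For $\exists x\,\psi$, the forward direction is analogous, while the backward direction requires choosing, for each $s\in X$, some $a_s\in A$ witnessing $\mA\models_{s(a_s/x)} \psi$, and then defining the required choice function $F\colon X\to A$ by $F(s)=a_s$.

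The only genuinely interesting case is the disjunction $\psi \vee \varphi$, and this is the step I would spell out most carefully. The forward direction is easy: given a split $X=Y\cup Z$ with $\mA\models_Y \psi$ and $\mA\models_Z \varphi$, the inductive hypothesis says every $s\in Y$ satisfies $\psi$ and every $s\in Z$ satisfies $\varphi$, so every $s\in X$ satisfies $\psi\vee \varphi$ in the usual sense. The reverse direction is where flatness has real content: assuming each $s\in X$ satisfies $\psi\vee\varphi$ as a single assignment, I would \emph{define} the canonical split
\[
Y = \{\, s \in X \mid \mA\models_s \psi\,\}, \qquad Z = \{\, s \in X \mid \mA\models_s \varphi\,\},
\]
so that $X = Y\cup Z$. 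Applying the inductive hypothesis in both directions then yields $\mA\models_Y \psi$ and $\mA\models_Z \varphi$, giving $\mA\models_X \psi\vee\varphi$.

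I do not anticipate any real obstacle; the only subtlety is making sure that, in the disjunction case, one exhibits an explicit partition rather than appealing to an existential statement, and that in the existential-quantifier case one chooses a Skolem-like function on $X$. All other cases follow mechanically from the definitions together with the inductive hypothesis, and the assumption that $\phi$ has no dependence atoms ensures that no non-flat clause of Definition \ref{def:semantics} is ever triggered.
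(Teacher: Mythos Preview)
Your induction is correct and is the standard route to flatness; note, however, that the paper does not actually prove this theorem---it is quoted from \cite{va07} without proof, so there is no in-paper argument to compare against. Your treatment of the disjunction and existential cases (the canonical split $Y=\{s\in X\mid \mA\models_s\psi\}$, $Z=\{s\in X\mid \mA\models_s\varphi\}$ and the Skolem-style choice of $F$) is exactly what is needed and matches the proof one finds in V\"a\"an\"anen's book.
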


In order to study the expressive power of a logic in terms of computational complexity, we have to define the notion of ``capturing'' of a complexity class. 


\begin{defi}
	Let $\mathcal{O}$ be the class of all finite \emph{successor structures}, i.e., all structures $\mA$ with $|\mA|= \{0,\dots,n-1\}$ that contain, possibly among other constants and relations, the constants $0$ and $\max$ and the relation $R = \{(x,x+1)\mid x \leq n-2\}$ for some $n \geq 1$. 
\end{defi}

\begin{defi}\hfill
\begin{enumerate}
\item Let $\LOGIC_{1}$ and $\LOGIC_{2}$ be two logics. Then $\LOGIC_{1}\equiv\LOGIC_{2}$ denotes that for every $\LOGIC_{1}$ sentence there is an equivalent $\LOGIC_{2}$ sentence, and vice-versa.
\item Let \CLASS be complexity class and \LOGIC be a logic. Then \CLASS is captured by \LOGIC, in symbols: $\CLASS\equiv\LOGIC$, if for all problems $P \subseteq\mathcal{O}$ it holds that $P \in \CLASS$ if and only if there exists an \LOGIC sentence $\phi$ for which $P = \{\mA\in\mathcal{O} \mid \mA \models \phi\}$ holds. 
\end{enumerate}
\end{defi}\medskip

\noindent For dependence logic it is known that it captures \NP, since $\D\equiv\Sigma^1_{1}$ \cite{va07} and $\Sigma^1_{1}\equiv\NP$ \cite{fag74}.

\begin{thm}\label{D=NP}
$\D \equiv \NP$.
\end{thm}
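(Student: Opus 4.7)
The strategy is to chain the two already-cited equivalences: $\D \equiv \Sigma^1_1$ (V\"a\"an\"anen \cite{va07}) and Fagin's theorem $\Sigma^1_1 \equiv \NP$. Since Fagin's theorem yields the second step immediately, the substantive work is the first equivalence, namely $\D \equiv \ESO$ at the level of sentences.

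For the inclusion $\D \le \ESO$ I would first bring any \D-sentence into the normal form~(\ref{D-NF}),
\[\forall \overline{x}\,\exists \overline{y}\,\bigl(\bigwedge_i \dep[\overline{z}_i,w_i] \wedge \psi\bigr),\]
and Skolemise each existentially quantified variable $y_j \in \overline{y}$ by a fresh function symbol $f_{y_j}$. A dependence atom $\dep[\overline{z}_i, w_i]$ with $w_i = y_j$ is then encoded by existentially quantifying (the graph of) an auxiliary function $g_i$ of the correct arity and adding the first-order axiom $f_{y_j}(\overline{x}) = g_i(\overline{z}_i)$, which expresses precisely that $f_{y_j}$ factors through the projection onto $\overline{z}_i$. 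The outcome is plainly a $\Sigma^1_1$ sentence.

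For the converse $\ESO \le \D$ I would simulate $\exists R_1 \cdots \exists R_k\, \phi$ inside \D by, for each $R_i$ of arity $r_i$, introducing a universally quantified tuple $\overline{u}^{(i)}$ together with an existentially quantified variable $v_i$ and the dependence atom $\dep[\overline{u}^{(i)}, v_i]$. The atom forces $v_i$ to be a function of $\overline{u}^{(i)}$, so (after reserving two distinguished elements to play the role of Boolean values) the choice function encodes the characteristic function of a relation $R_i' \subseteq A^{r_i}$, which is substituted for $R_i$ when translating $\phi$. Composing both translations with Fagin's theorem then yields $\D \equiv \NP$.

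The main obstacle is bridging the conceptual gap between team semantics, on which \D is evaluated, and Tarskian semantics, on which \ESO is evaluated. The key observation will be that a team $X$ with $\dom(X) = \{x_1, \dots, x_n\}$ corresponds to its relation $\rel(X) \subseteq A^n$, and that the universal quantifier in~(\ref{D-NF}) exposes every assignment as a row of this relation, so the dependence atoms translate literally into functional-dependency constraints on the graphs of the Skolem functions. Once this dictionary is fixed, both translations go through by a structural induction of moderate complexity.
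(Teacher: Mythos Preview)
Your proposal is correct and follows exactly the paper's approach: the paper does not prove this theorem beyond citing $\D\equiv\Sigma^1_1$ from \cite{va07} and Fagin's $\Sigma^1_1\equiv\NP$ from \cite{fag74}, and you chain the same two results. Your additional sketch of the two translations underlying $\D\equiv\Sigma^1_1$ is faithful to V\"a\"an\"anen's argument (the paper itself reuses the $\ESO\to\D$ direction in the proof of Lemma~\ref{SOHORN_in_SDHORN}), so there is nothing to correct.
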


As said, one cornerstone in the proof of the preceding theorem is a transformation of $\Sigma^1_{1}$-formulae into $\D$-formulae. It can be observed that the appearing dependencies are of a very particular form.

\begin{defi}[\BD]\label{def:bd}
	A Boolean dependence formula (\BD formula) is a \D formula of the form
	\[
		\phi = \forall\overline{x}\exists\overline{y}\left(\bigwedge_{1\le i\le n} \dep[\overline{z}_i,y_i] \wedge\theta\right),
	\]
	where $\overline{y}=(y_1,\ldots, y_n)$ are pairwise distinct variables, $\overline{z}_i\subseteq \overline{x}$, and $\theta$ is an arbitrary first-order formula, but 
	the existentially quantified variables can only appear in atomic formulae of the form $y_i = 0$ or $y_i = y_j$.
\end{defi}

The next result shows that $\BD$ is essentially as expressive as full dependence logic, even in the absence of a successor relation.

\begin{thm}\label{thm:BD=D}
\begin{enumerate}
\item Over  structures with built-in plus and times, $\BD\equiv\D$. 
\item The logic \BD can define \NP-complete problems even without the built-in successor relation.
\end{enumerate}
\end{thm}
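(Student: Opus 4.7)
The plan is to prove part~(1) by routing every \D-sentence through $\Sigma_1^1$ and reconstructing an equivalent \BD-sentence. Since $\D\equiv\Sigma_1^1$ (the syntactic half of Theorem~\ref{D=NP}, due to V\"a\"an\"anen), I may assume the given sentence has the shape $\exists R_1\cdots R_m\,\forall\overline{x}'\,\theta'$, with $\theta'$ quantifier-free and in negation normal form. The idea is to simulate each relation $R_j$ by a ``characteristic'' Skolem function encoded via one of the $y$-variables, using the convention that $y=0$ means ``belongs to $R_j$''.

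Concretely, for each relation $R_j$ of arity $k_j$ and each of its atomic occurrences $R_j(\overline{t}_j^k)$ in $\theta'$ (indexed by $k$), I would introduce a fresh universal tuple $\overline{u}_j^k$ of length $k_j$, a fresh existential variable $y_j^k$, and the dependence atom $\dep[\overline{u}_j^k,y_j^k]$. Each occurrence $R_j(\overline{t}_j^k)$ in $\theta'$ is then rewritten as $\bigvee_\ell u_j^{k,\ell}\neq t_j^{k,\ell}\vee y_j^k=0$ and each $\neg R_j(\overline{t}_j^k)$ as $\bigvee_\ell u_j^{k,\ell}\neq t_j^{k,\ell}\vee y_j^k\neq 0$. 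To force all the $y_j^k$ for fixed $j$ to realize a single Skolem function I add consistency clauses $\bigvee_\ell u_j^{k,\ell}\neq u_j^{k',\ell}\vee y_j^k=y_j^{k'}$ for each pair $(k,k')$. Placing everything under a single $\forall\exists$-block yields a sentence of the exact shape required by Definition~\ref{def:bd}, since the existentials occur only in the three allowed kinds of literal.

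For correctness, the critical case of each replaced literal is the ``diagonal'' assignment $\overline{u}_j^k=\overline{t}_j^k(\overline{a}')$, in which it reduces to the intended $R_j$-encoding; for every other assignment the disjunction is trivially satisfied, which only weakens the literal. Because $\theta'$ is in negation normal form this relaxation is monotone upward in truth value, giving the direction from $\Sigma_1^1$ to \BD: from witnessing $R_j$, defining $y_j^k$ as the characteristic function of $R_j$ makes $\theta''$ and the consistency clauses hold for every assignment. The converse uses the consistency clauses to identify the Skolem functions of all $y_j^k$ (for fixed $j$) with a single $f_j$, sets $R_j:=f_j^{-1}(0)$, and recovers $\theta'(\overline{a}')$ by evaluating at the diagonal. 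The main obstacle I anticipate is verifying this monotonicity carefully when $R_j$ occurs both positively and negatively in $\theta'$: this is precisely where the consistency clauses become indispensable, since they prevent the weakened positive and negative literals from being simultaneously satisfied by incompatible values of the corresponding $y_j^k$.

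Part~(2) is then a corollary: neither the equivalence $\D\equiv\Sigma_1^1$ nor the translation above invokes arithmetic or a successor relation, and by Fagin's theorem every NP-complete problem admits a $\Sigma_1^1$-definition over arbitrary finite vocabularies. Feeding, for instance, the standard $\Sigma_1^1$-description of 3-colorability into the construction produces a \BD-sentence defining it on structures with no built-in successor.
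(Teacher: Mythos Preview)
Your translation from the normal form $\exists R_1\cdots R_m\,\forall\overline{x}'\,\theta'$ into \BD is correct and is essentially the same construction the paper invokes (V\"a\"an\"anen's translation from \cite[Sect.~6.3]{va07}, with the multiple-occurrence trick you describe appearing in the paper as equation~\eqref{claim-so-d:dup}). The genuine gap is the sentence in which you claim that ``since $\D\equiv\Sigma_1^1$\ldots\ I may assume the given sentence has the shape $\exists R_1\cdots R_m\,\forall\overline{x}'\,\theta'$ with $\theta'$ quantifier-free''. The equivalence $\D\equiv\Sigma_1^1$ only gives you an arbitrary $\Sigma_1^1$-sentence, whose first-order part may alternate quantifiers. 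Bringing it into the purely universal (SNP) shape requires Skolemizing the first-order existentials, which produces second-order \emph{function} quantifiers; and a Skolem term $f(\overline t)$ then lands inside arbitrary atoms, so after your translation the corresponding $y$-variable would occur in atoms that are \emph{not} of the form $y_i=0$ or $y_i=y_j$. The way the paper closes this gap is precisely by using the built-in plus and times: by the result of \cite{im89}, over such structures every $\NP$ property already has a relational $\exists\overline P\,\forall\overline x\,\theta$ description with $\theta$ quantifier-free, so no function Skolemization is needed and V\"a\"an\"anen's translation lands in \BD. In other words, the arithmetic hypothesis in part~(1) is exactly what licenses the normal form you assumed without comment.

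This also undermines your argument for part~(2). You write that ``neither the equivalence $\D\equiv\Sigma_1^1$ nor the translation above invokes arithmetic'', and conclude that part~(2) is a corollary. But the normal-form step \emph{does} need arithmetic in general, so part~(2) cannot be derived that way. Your concrete example nevertheless works, for a reason you do not state: the usual $\Sigma_1^1$-definition of $3$-colorability is already in SNP form (purely relational second-order existentials, purely universal first-order part), so your translation applies to it directly without any appeal to arithmetic. The paper makes exactly this move, citing the fragment Strict~\NP (SNP) of \cite{kova87}: SNP sentences translate into \BD over arbitrary structures, and SNP contains \NP-complete problems. So your conclusion for part~(2) is salvageable, but the justification should be ``because $3$-colorability is in SNP'', not ``because nothing in part~(1) used arithmetic''.
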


\begin{proof}
It is known that \NP can be captured by the class of second-order formulae of the form
$\phi = \exists{P_1}\dots\exists{P_n}\forall\overline{x}\theta$, where $\theta$ is a quantifier-free FO-formula, over structures with built-in predicates for plus and times \cite{im89}.
As in V\"a\"an\"anen's proof we transform this into a \D formula, see \cite[Sect.~6.3]{va07}. It can be observed that the variables that appear as last variable in a dependence atom are only used in the form given in Def.~\ref{def:bd} above. This proves the first claim.

The second claim follows from the fact that the fragment Strict \NP, SNP \cite{kova87}, 
of sentences $\exists{P_1}\dots\exists{P_n}\forall\overline{x}\theta$,
where $\theta$ is quantifier free without order, translates into $\BD$ under the same translation as used above, but SNP can define NP-complete problems.
\end{proof}


\subsection{Horn Logic}

In this section we introduce first and second-order Horn logic and discuss their expressive power.

\begin{defi}[\FO-Horn]
A \emph{clause} is a disjunction of atomic and negated atomic formulae, including $\bot$ and $\top$. A \emph{Horn clause} is a clause with at most one non-negated atom.
\end{defi}

\begin{defi}[\SO-Horn, \cite{gr92}]\label{Gradel_HORN}
A second-order Horn formula is a second-order formula $\phi$ of the form
\[
Q_1 P_1 \dots Q_k P_k \forall \overline x \bigwedge_j C_j,
\]
where for all $i,j$, $Q_i \in \{\forall, \exists\}$, $P_i$ are relation symbols, $C_{j}$ are clauses that contain at most one positive occurrence of a predicate $P_{i}$.
We denote by \SO-Horn the set of all second-order Horn formulae.
The existential fragment of \SO-Horn, denoted \ESOHORN, is the fragment where all $Q_i = \exists$.
\end{defi}

\begin{thm}[\cite{gr92}]\label{ESOHORN}
\begin{enumerate}
\item $\SOHORN\equiv\ESOHORN$.
\item $\ESOHORN \equiv \PTIME$.
\end{enumerate}
\end{thm}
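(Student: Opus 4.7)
The plan is to prove part (2) directly and obtain part (1) by combining its lower bound with an elimination of universal second-order quantifiers. Part (2) has two inclusions, proved separately.

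For $\ESOHORN \subseteq \PTIME$: given a sentence $\exists P_1 \cdots \exists P_k \forall \overline x \bigwedge_j C_j$ and a finite structure $\mA$ with $|A| = n$, I would instantiate the universal first-order quantifier by all $n^{|\overline x|}$ tuples and evaluate each first-order literal in $\mA$. What remains is a polynomial-size propositional Horn formula over the ground atoms $P_i(\overline a)$, and propositional Horn satisfiability is decidable in linear time by unit propagation. This yields a polynomial-time decision procedure for the set of models of the sentence.

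For $\PTIME \subseteq \ESOHORN$: I would simulate a deterministic polynomial-time Turing machine $M$ on input a successor structure $\mA$. Representing time instants and tape positions by $k$-tuples (so that $n^k$ bounds the running time), introduce existentially quantified second-order relations $T_\sigma(\overline t, \overline i)$, $H(\overline t, \overline i)$, and $S_q(\overline t)$ for tape symbols, head location, and state. The initial configuration and the acceptance condition are expressible by simple Horn clauses. Each deterministic transition rule of $M$ yields a clause whose single positive head atom is a predicate value at time $\overline t + 1$ determined by a conjunction of literals at time $\overline t$. The principal technical obstacle here is packaging the frame axioms: cells not currently scanned must carry their symbols forward. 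This remains Horn because the implication $T_\sigma(\overline t, \overline i) \wedge \neg H(\overline t, \overline i) \rightarrow T_\sigma(\overline t + 1, \overline i)$ has exactly one positive atom, matching the Horn constraint.

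For part (1): $\ESOHORN \subseteq \SOHORN$ is immediate from the definition. For $\SOHORN \subseteq \ESOHORN$, the route I would take is to first establish $\SOHORN \subseteq \PTIME$ by generalizing the grounding argument above to an alternating second-order prefix, exploiting the canonical least-model property of propositional Horn theories so that each universal second-order block can be evaluated in polynomial time rather than by exhaustive enumeration; then applying $\PTIME \subseteq \ESOHORN$ from part (2) gives the desired embedding. This is the step I expect to be the main obstacle, since the semantic elimination of universal second-order quantifiers must preserve the Horn shape while remaining within $\ESOHORN$, and the justification that Horn-SAT under an alternating second-order prefix stays polynomial relies in an essential way on the monotone fixed-point semantics of Horn clauses.
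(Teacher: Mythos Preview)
The paper does not prove this theorem: it is stated with the citation \cite{gr92} and no proof environment follows. The result is quoted as background from Gr\"adel's 1992 paper, so there is no ``paper's own proof'' to compare your proposal against.

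That said, your outline is essentially the standard route taken in Gr\"adel's original paper. The grounding argument for $\ESOHORN \subseteq \PTIME$ and the deterministic Turing-machine encoding for $\PTIME \subseteq \ESOHORN$ are both as expected; your observation that the frame axioms stay Horn is the right technical check. For part~(1), your plan of showing $\SOHORN \subseteq \PTIME$ and then invoking $\PTIME \subseteq \ESOHORN$ is also Gr\"adel's strategy. The one place where your sketch is thin is exactly where you flag it: the claim that an alternating second-order prefix over a Horn matrix can still be evaluated in polynomial time. The phrase ``canonical least-model property'' gestures at the right idea, but the actual argument requires showing that for a universally quantified relation $P$ in a Horn matrix, the worst-case interpretation of $P$ (with respect to satisfiability of the remaining formula) can be identified without enumeration---Gr\"adel handles this by a careful analysis of how positive and negative occurrences of $P$ interact across clauses. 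If you were writing this out in full, that step would need a concrete lemma rather than an appeal to monotone fixed-point semantics.
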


\section{Horn Fragments of Dependence Logic}\label{sect:DHorn}

The Horn requirement that clauses contain at most one non-negative atom can also be transferred into the context of dependence logic. 

\begin{defi}[\DHORN]\label{def_dhorn}
	A \DHORN formula is a \D formula of the form
	\[
		\phi = \forall\overline{x}\exists\overline{y}(\bigwedge_{1\le i\le n} \dep[\overline{z}_i,y_i] \wedge \bigwedge_j C_j),
	\]
where $\overline{y}=(y_1,\ldots, y_n)$ are pairwise distinct variables,
 $\overline{z}_i\subseteq\overline{x}$, and each clause $C_j$ contains at most one positive atomic formula with an occurrence of an existentially quantified variable.
\end{defi}

As we will later show, the \DHORN fragment is essentially as expressive as full dependence logic, as it can express \NP-complete problems, even without successor. But we will prove that a subfragment that informally uses only binary information about the dependent variables,
exactly in the same way as in the logic \BD introduced above,
has a possibly lower expressive power, since it corresponds to polynomial time computation.

\begin{defi}[\SDHORN]
	A Boolean \DHORN formula (\SDHORN formula) is a \DHORN formula
	\[
		\phi = \forall\overline{x}\exists\overline{y}(\bigwedge_i \dep[\overline{z}_i,y_i] \wedge \bigwedge_j C_j),
	\]
with the additional condition that the existentially quantified variables can only appear in atomic formulae of the form $y_i = 0$ or $y_i = y_j$.
\end{defi}


\section{Expressive power of the Horn fragments}\label{sect:results}

\subsection{\SDHORN captures \PTIME}

In this section we investigate the expressive power of the fragments \DHORN and \SDHORN. We start by proving that the latter logic captures polynomial time.

\begin{thm}\label{SDHORN=PTIME}
	The logic \SDHORN captures the complexity class \PTIME on finite successor structures: $\PTIME \equiv \SDHORN$.
\end{thm}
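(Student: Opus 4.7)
The plan is to establish both inclusions of $\PTIME \equiv \SDHORN$, using Gr\"adel's characterization $\ESOHORN \equiv \PTIME$ (Theorem~\ref{ESOHORN}(2)) as a bridge.

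For $\PTIME \subseteq \SDHORN$, I would start from an \ESOHORN sentence $\psi = \exists P_1 \cdots \exists P_k \forall \overline{x} \bigwedge_j C_j$ given by Gr\"adel and construct an equivalent \SDHORN sentence by the V\"a\"an\"anen-style translation used in Theorem~\ref{thm:BD=D}. For each occurrence of a second-order atom $P_i(\overline{t})$ in the body, introduce fresh universally quantified ``probe'' variables $\overline{x}_i^{(\overline{t})}$ and a fresh dependent variable $y_i^{(\overline{t})}$ with the dependence atom $\dep[\overline{x}_i^{(\overline{t})}, y_i^{(\overline{t})}]$. Each positive occurrence $P_i(\overline{t})$ is replaced by $\bigvee_\ell (x_{i,\ell}^{(\overline{t})} \neq t_\ell) \vee (y_i^{(\overline{t})} = 0)$ and each negative occurrence $\neg P_i(\overline{t})$ by $\bigvee_\ell (x_{i,\ell}^{(\overline{t})} \neq t_\ell) \vee (y_i^{(\overline{t})} \neq 0)$, and consistency clauses $\bigvee_\ell (x_{i,\ell}^{(\overline{t})} \neq x_{i,\ell}^{(\overline{t}')}) \vee (y_i^{(\overline{t})} = y_i^{(\overline{t}')})$ are added for each pair of occurrences of the same $P_i$. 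Since every \ESOHORN clause carries at most one positive $P$-literal, each translated clause contains at most one positive existential-variable literal; the consistency clauses likewise carry exactly one. The two formulas are equivalent under the natural correspondence $P_i(\overline{z}) \iff F_i(\overline{z}) = 0$ between \ESOHORN predicates and (coincident) \SDHORN witness functions.

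For $\SDHORN \subseteq \PTIME$, I would give a polynomial-time model-checking algorithm. Given $\phi = \forall \overline{x} \exists \overline{y} (\bigwedge_i \dep[\overline{z}_i, y_i] \wedge \bigwedge_j C_j)$ and $\mA$, satisfaction reduces to the existence of witness functions $F_i : A^{|\overline{z}_i|} \to A$ making every clause instance $C_j[y_i \mapsto F_i(\sigma(\overline{z}_i))]$ (for $\sigma \colon \overline{x} \to A$ and $j$) true. Introduce ``cell variables'' $u_{i, \overline{t}}$ for each $\overline{t} \in A^{|\overline{z}_i|}$; each clause instance then becomes a disjunctive constraint whose literals are of the form $u_{i, \overline{t}} = 0$, $u_{i, \overline{t}} = u_{j, \overline{t}'}$, or their negations (the first-order parts being Boolean-valued after fixing $\sigma$). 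The \SDHORN Horn restriction translates to ``at most one positive existential literal per constraint'', yielding a Horn-style CSP over the equality-plus-$0$ signature. Decide it by iterated Horn propagation combined with union--find on the forced equalities, closing under unit implications and checking consistency with forbidden values. The number of cell variables and constraints is polynomial in $|A|$ since $\phi$ is fixed, so the algorithm runs in polynomial time.

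The main obstacle is to rigorously establish correctness and the polynomial-time bound of the Horn propagation for this generalised equality-plus-$0$ CSP, in particular ruling out the need for case splitting when clauses mix positive equality and disequality literals. A direct formula-level translation $\SDHORN \to \ESOHORN$ would be an attractive alternative, but it faces a genuine technical barrier: the naive Boolean encoding ``$P_i(\overline{z}) \iff F_i(\overline{z}) = 0$'' correctly handles $y_i = 0$, $y_i \neq 0$, and positive $y_i = y_j$ (the biconditional $P_i \leftrightarrow P_j$ splits into two Horn implications that distribute into the enclosing clause), yet it loses information for literals $y_i \neq y_j$ when $F_i, F_j$ take distinct nonzero values and therefore collapse to the same ``non-zero'' Boolean class. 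Circumventing this barrier---either via the algorithmic argument above or via an auxiliary equivalence-relation predicate with Horn axioms for reflexivity, symmetry, transitivity, and compatibility with the $P_i$'s---is where the real work lies.
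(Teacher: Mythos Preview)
Your translation for $\PTIME \subseteq \SDHORN$ is essentially the paper's: replace the second-order predicates by characteristic functions, normalise term occurrences via fresh universally quantified probe variables together with consistency clauses, and then apply V\"a\"an\"anen's translation into dependence logic. Lemma~\ref{SOHORN_in_SDHORN} proceeds in exactly this way.

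For $\SDHORN \subseteq \PTIME$, however, your primary algorithmic route has a genuine gap. Horn propagation with union-find is \emph{sound} but not \emph{complete} for the equality-plus-$0$ CSP you describe. Consider the \SDHORN sentence
\[
\exists y_1 \exists y_2 \exists y_3\bigl(\dep[y_1]\wedge\dep[y_2]\wedge\dep[y_3]\wedge \neg(y_1{=}y_2)\wedge\neg(y_2{=}y_3)\wedge\neg(y_1{=}y_3)\bigr),
\]
which asserts $|A|\ge 3$ (this is essentially the example the paper uses after Corollary~\ref{k_variables}). Over a two-element structure your procedure finds no unit clause to fire, derives no contradiction, and accepts---wrongly. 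The difficulty is exactly the one you flag at the end: after propagation one still has to realise the quotient by an $A$-valued assignment, and nothing prevents the number of equivalence classes from exceeding $|A|$. Any repair must exploit that each clause instance touches at most $k$ cell variables for a \emph{fixed} $k$, and must still treat small structures separately; this is not a routine bookkeeping step.

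The paper takes precisely the route you list as the ``attractive alternative'': a direct formula-level translation $\SDHORN\to\ESOHORN$ (Lemma~\ref{SDHORN_in_ESOHORN}). For every pair $\{r,s\}$ it introduces a predicate $P_{\{r,s\}}(\overline{x})$ standing for $f_r(\overline{z}_r)=f_s(\overline{z}_s)$, adds Horn clauses expressing transitivity of equality among the $P_{\{r,s\}}$ and functionality (the truth of $P_{\{r,s\}}(\overline{x})$ depends only on the $\overline{z}_r$-coordinates), and then \emph{reconstructs} witness functions with range $\{0,\dots,k\}$ from any model of the resulting \ESOHORN sentence. This yields equivalence only for $|A|\ge k+1$; the finitely many smaller structures are hard-coded. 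So the ``real work'' you anticipated is done by the syntactic translation together with an index-based witness construction, not by a propagation algorithm, and the $y_i\neq y_j$ barrier you identified is exactly what the auxiliary $P_{\{r,s\}}$ predicates are there to overcome.
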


We prove the theorem with the two following lemmas.

%
	
	\begin{lem}\label{SOHORN_in_SDHORN}
		$\SOHORN\subseteq\SDHORN$ on finite successor structures.	
	\end{lem}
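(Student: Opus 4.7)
My approach is to reduce $\SOHORN\subseteq\SDHORN$ to $\ESOHORN\subseteq\SDHORN$ via Theorem~\ref{ESOHORN}(1), and then give an explicit syntactic translation. Consider an $\ESOHORN$-sentence $\phi = \exists P_1\cdots\exists P_k\,\forall\overline{x}\,\bigwedge_c C_c$. The idea is to encode each predicate $P_i$ as the zero-preimage of a function, represented in dependence logic by a dep atom $\dep[\overline{u},y]$ under the convention that $P_i(\overline{u})$ holds exactly when $y=0$.

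A single Horn clause may contain several occurrences of the same $P_i$ at distinct argument terms, while any given team assignment exposes the encoded function at only one argument, so one dep atom per predicate does not suffice. I therefore list the occurrences of $P_i$ in $\phi$ as $P_i(\overline{t}_{i,1}),\ldots,P_i(\overline{t}_{i,m_i})$ and, for each, introduce a fresh universal vector $\overline{u}^{(i,j)}$ of the arity of $P_i$ together with a fresh existential $y_{i,j}$ and its own dep atom. The translated formula is
\begin{equation*}
\phi^{*} \dfn \forall\overline{x}\,\forall(\overline{u}^{(i,j)})_{i,j}\,\exists(y_{i,j})_{i,j}\Bigl(\bigwedge_{i,j}\dep[\overline{u}^{(i,j)},y_{i,j}] \wedge \bigwedge_c C_c^{*} \wedge \bigwedge_{i,\,j<j'} E_{i,j,j'}\Bigr),
\end{equation*}
where $C_c^{*}$ replaces each positive $P_i(\overline{t}_{i,j})$ in $C_c$ by $(\overline{u}^{(i,j)}\neq\overline{t}_{i,j})\vee(y_{i,j}=0)$, each negative $\neg P_i(\overline{t}_{i,j})$ by $(\overline{u}^{(i,j)}\neq\overline{t}_{i,j})\vee\neg(y_{i,j}=0)$, and $E_{i,j,j'}\dfn(\overline{u}^{(i,j)}\neq\overline{u}^{(i,j')})\vee(y_{i,j}=y_{i,j'})$ is a compatibility clause; tuple disequalities expand as usual into disjunctions of component disequalities.

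Checking $\phi^{*}\in\SDHORN$ is then routine: existential variables occur only in atoms of the form $y_{i,j}=0$ and $y_{i,j}=y_{i,j'}$; each translated $C_c^{*}$ inherits the Horn discipline, the unique positive $P$-atom (if any) producing the lone positive $y_{i_0,j_0}=0$ while negative $P$-atoms add only $\neg(y_{i,j}=0)$; and each $E_{i,j,j'}$ carries the single positive existential atom $y_{i,j}=y_{i,j'}$. Equivalence is proved in both directions by Skolem reasoning. From a witness $R_i$ for $\phi$, set $F_{i,j}(\overline{a})=0$ iff $\overline{a}\in R_i$; all $F_{i,j}$ for fixed $i$ coincide so every $E_{i,j,j'}$ is trivially satisfied, and in each $C_c^{*}$ any disjunct surviving the collapse of the $\overline{u}^{(i,j)}\neq\overline{t}_{i,j}$-equalities reproduces the corresponding literal of the original $C_c$. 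Conversely, Skolem witnesses for $\phi^{*}$ must satisfy every $E_{i,j,j'}$, which forces $F_{i,j}=F_{i,j'}$ for fixed $i$, yielding a single $F_i$ and hence $R_i\dfn\{\overline{a}\mid F_i(\overline{a})=0\}$; for any $\overline{x}$-assignment $\overline{a}$, the team assignment instantiating each $\overline{u}^{(i,j)}$ to $\overline{t}_{i,j}(\overline{a})$ turns $C_c^{*}$ back into $C_c$ under $R_i$, which must hold.

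The principal obstacle is exactly the multi-occurrence issue just described: without the per-occurrence copies and the compatibility clauses, one cannot access the encoded function at more than one argument per team assignment, and the translation collapses. The permission in $\SDHORN$ to use $y_i=y_j$-atoms is precisely what keeps the compatibility constraints inside the fragment, so the design of $\SDHORN$ is tailored exactly to make this translation go through.
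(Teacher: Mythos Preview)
Your proof is correct and follows essentially the same approach as the paper's. The paper also reduces to \ESOHORN via Theorem~\ref{ESOHORN}, encodes each $P_i$ by a characteristic function, introduces fresh universally quantified argument tuples with the guard $(\overline{w}=\overline{t}\rightarrow\cdots)$ exactly as in your $(\overline{u}^{(i,j)}\neq\overline{t}_{i,j})\vee\cdots$, and handles multiple occurrences of the same predicate by introducing fresh copies tied together by compatibility clauses of precisely your form $E_{i,j,j'}$; the only cosmetic difference is that the paper passes through an intermediate functional second-order formulation and then invokes \cite[Theorem~6.15]{va07}, whereas you write down the resulting \SDHORN sentence directly.
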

	\begin{proof}By Theorem  \ref{ESOHORN}, it suffices to show the claim for \ESOHORN.	Let 
\begin{equation*}
\phi = \exists{P_1}\dots\exists{P_n}\forall\overline{x}\bigwedge_{j}C_j
\end{equation*}
be a \ESOHORN sentence. First we  replace quantification over relations by quantification over functions in $\phi$. This is achieved by coding each relation $P_i$ by its characteristic function. Furthermore, the clauses $C_j$ of $\phi$ are replaced by clauses  $C'_j$  acquired by replacing the  atomic formulae  $P_i(\overline{z})$  by $F_i(\overline{z})=0$. After these transformations, we get a sentence $\phi'$ which is logically equivalent to $\phi$, where
					\[\phi' = \exists{F_1}\dots\exists{F_n}\forall\overline{x}\bigwedge_{j}C'_j.\]		
In dependence logic, the functions $F_i$ will be translated by existentially quantified variables.  For this reason, each occurrence of $F_i$ in $\phi'$ has to be of the form $F_i(\overline{z}_i)$ for some unique tuple $\overline{z}_i$ of pairwise distinct variables.
This can be accomplished  analogously to Theorem 6.15 in \cite{va07}.   First, one by one, we replace  each occurrence of  every term  $F_i(\overline{t})$, where $t=(t_1,\ldots,t_k)$, in $\phi'$ by a new term $F_i(\overline{w})$, where $\overline{w}=(w_1,\ldots,w_k)$ is a fresh tuple of pairwise distinct variables quantified universally, and use the equivalence of $\bigwedge_{j}C'_j(F_i(\overline{t}))$ and 
\begin{equation}\label{terms}
\forall \overline{w}( \bigwedge_{1\le p\le k}w_p=t_p\rightarrow \bigwedge_{j}C'_j[ F_i(\overline{w})/F_i(\overline{t})] ).  
\end{equation}
In this way, $\bigwedge_{j}C'_j$ is transformed to an equivalent formula  containing only simple terms of the form $F_i(\overline{w})$. It is easy to 
see that using distributivity laws of \FO, the quantifier-free part of formula \eqref{terms} can be translated to a conjunction of clauses satisfying the condition of Definition \ref{Gradel_HORN}, i.e., these transformations do not carry us outside of \ESOHORN. 

Let us now assume that all occurences of the symbols $F_i$ are of the form $F_i(\overline{w})$ for some tuple $\overline{w}$  of pairwise distinct variables. We still need to ensure that for each $F_i$ the tuple  $\overline{w}$ is unique. Suppose that this in not the case, i.e., assume that  $\phi'$ contains two occurrences $F_i(\overline{z})$ and  $F_i(\overline{z}')$
of  the same $F_i$  but with  different tuples of variables. Now the idea is to replace all  occurrences of $F_i(\overline{z}')$  by $G(\overline{z}')$, where $G$ is fresh symbol,  and use the fact that $\phi'$ is equivalent to:
		
		\begin{equation}\label{claim-so-d:dup}	\exists{F_1}\dots\exists{F_n}\exists{G}\forall\overline{x}\bigwedge_{j}C_j'[G(\overline{z}')/F_i(\overline{z}')]
			\wedge (\bigwedge _kz_k = z'_k \rightarrow F_i(\overline{z})=G(\overline{z}'))
		\end{equation}
Again note that this transformation does not carry us outside of  \ESOHORN. Hence we may assume that in  $\phi'$  all occurrences of $F_i$ are of the form $F_i(\overline{z}_i)$ for some unique tuple $\overline{z}_i$ of pairwise distinct variables. We  apply the result of \cite[Theorem 6.15]{va07}  which allows us to directly translate $\phi'$ into $\SDHORN$ which looks as follows: 
		
		\begin{equation}\label{claim-so-d:dup2}
			 \forall\overline{x}\exists y_1\dots\exists y_n (\bigwedge_{i=1}^n \dep[\overline{z}_i,y_i] \wedge \bigwedge_{j}C_j'[y_i/ F_i(\overline{z}_i)]).
		\end{equation}
	\end{proof}

\begin{lem}\label{SDHORN_in_ESOHORN}
	$\SDHORN\subseteq \ESOHORN$ on finite successor structures.
\end{lem}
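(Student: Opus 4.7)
My approach is semantic: by Grädel's Theorem \ref{ESOHORN}(2), $\ESOHORN$ captures $\PTIME$ on finite successor structures, so it suffices to show that for every $\SDHORN$ sentence $\phi$, the model class $\{\mA : \mA \models \phi\}$ is polynomial-time decidable. An equivalent $\ESOHORN$ sentence is then furnished by Grädel's $\PTIME$-to-$\ESOHORN$ translation. A purely syntactic translation via predicates $Z_i(\overline{z}_i)$ for ``$f_i(\overline{z}_i) = 0$'' and $E_{ij}(\overline{z}_i, \overline{z}_j)$ for ``$f_i(\overline{z}_i) = f_j(\overline{z}_j)$'' together with the natural Horn axioms for an equivalence relation turns out not to preserve equivalence: such predicates may form too many classes to come from actual functions into $A$, so the semantic route is cleaner.

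Fix $\phi = \forall\overline{x}\exists\overline{y}(\bigwedge_i \dep[\overline{z}_i,y_i] \wedge \bigwedge_j C_j) \in \SDHORN$ and a successor structure $\mA$ with domain $A$. By Skolemization, $\mA \models \phi$ iff there exist functions $f_i : A^{|\overline{z}_i|} \to A$ such that $\bigwedge_j C_j[f_i(\overline{z}_i)/y_i]$ holds for every $\overline{x} \in A^{|\overline{x}|}$. I would introduce one unknown $v_{i,\overline{a}}$ (ranging over $A$) for each pair $(i,\overline{a})$ with $\overline{a} \in A^{|\overline{z}_i|}$; instantiating each $C_j$ at every $\overline{x}$ and evaluating its first-order literals on $\mA$ yields a polynomial-size conjunction of Horn equality/disequality clauses over the $v$'s, each surviving clause being a disjunction containing at most one positive literal from $\{v = 0,\, v = w\}$ together with negative literals from $\{v \neq 0,\, v \neq w\}$. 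I then solve this Horn equality CSP by a union-find-based unit-propagation procedure analogous to Horn-SAT: maintain equivalence classes of the $v$'s together with an optional distinguished ``zero''-class, iteratively enforce the positive head of any clause whose body equalities already hold in the current partition (as a class merge or a zero-marking, reporting any inconsistency as unsatisfiable), and upon stabilization verify that every purely-negative clause is satisfied by the partition and that the partition can be realized by concrete values in $A$.

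The main obstacle is the realizability step: assigning distinct elements of $A$ to distinct equivalence classes superficially resembles graph colouring. Here the Horn restriction is decisive, as propagation produces a canonical fixed-point in which all the merges demanded by the formula have already occurred; exploiting the finiteness of $A$ and the restricted form of the surviving purely-negative clauses, one checks that this partition is realizable in $A$ precisely when $\mA \models \phi$, and that any further collapsing forced by the bound $|A|$ either is compatible with all remaining negative clauses or directly witnesses unsatisfiability. With PTIME-decidability of $\SDHORN$-definable classes in hand, Theorem \ref{ESOHORN}(2) immediately produces an equivalent $\ESOHORN$ sentence defining the same class.
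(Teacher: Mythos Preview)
Your overall strategy---show that model-checking $\SDHORN$ sentences is in \PTIME and then invoke Theorem~\ref{ESOHORN}(2)---is legitimate, but the proof has a real gap exactly where you flag ``the main obstacle.'' After propagation you obtain a minimal partition on the unknowns $v_{i,\overline a}$ together with a set of purely-negative clauses, and you then assert that ``one checks that this partition is realizable in $A$ precisely when $\mA\models\phi$.'' This is the whole content of the lemma, and it is not checked. In general, deciding whether a partition can be realised by a map into a finite set $A$ so that a given family of disjunctive disequality constraints is satisfied is a colouring-type problem; nothing you have written rules out that it is hard. The sentence about ``any further collapsing forced by the bound $|A|$ either is compatible \ldots\ or directly witnesses unsatisfiability'' is a hope, not an argument.

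What makes the situation tractable---and what the paper's proof exploits---is precisely the syntactic translation you dismiss in your first paragraph. Because every atom involving the $y_i$ in a $\SDHORN$ clause has the form $y_r=y_s$ or $y_r=0$, any single instantiation of $\overline x$ produces constraints among at most $k+1$ values ($f_1(\overline z_1),\ldots,f_k(\overline z_k)$ and $0$). Hence if one introduces predicates $P_{\{r,s\}}(\overline x)$ for ``$f_r(\overline z_r)=f_s(\overline z_s)$,'' adds Horn axioms for transitivity and for functionality in the $\overline z_r$-coordinates, and takes the least model of the resulting propositional Horn theory, then a canonical realising assignment $g_i(\overline a)\in\{0,\ldots,k\}$ can be read off directly (take the least $l<i$ with $P_{\{i,l\}}$ forced at some extension, else $i$). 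This shows simultaneously that your realizability step succeeds whenever $|A|\ge k+1$ and that the syntactic route yields an equivalent $\ESOHORN$ sentence for all but finitely many structures. So your claim that the predicate translation ``turns out not to preserve equivalence'' is mistaken; it does preserve equivalence once $|A|\ge k+1$, and this bound is exactly the missing idea in your own argument.
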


	\begin{proof}
	We prove the lemma by showing that every $\SDHORN$ sentence can be translated to an equivalent \ESOHORN sentence. Let 
	 \[\phi = \forall\overline{x}\exists y_1\ldots \exists y_k(\bigwedge_{i=1}^{h}\dep[\overline{z}_i,y_i] \wedge \bigwedge_{j}C_j)\]
	 be a $\SDHORN$ sentence.  Without loss of generality, we assume that atoms $y_j = y_j$ do not appear in $\phi$.
	By the results of \cite{va07}, this sentence is equivalent to the \ESO sentence $\phi^*$
	\[    \exists f_1\ldots \exists f_k \forall\overline{x}   \bigwedge_{j}C^*_j,      \] 
	where $C^*_j$ is defined by replacing the occurrences of $y_i$ by the term $f_i(\overline{z}_i)$.
	It now suffices to transform this sentence to an equivalent  \ESOHORN sentence. This is achieved as follows. For each pair $\{r,s\}$ such that either $f_r(\overline{z}_r)= f_s(\overline{z}_s)$
	 or  $f_s(\overline{z}_s)= f_r(\overline{z}_r)$ occurs in $\phi^*$ we  introduce a new relation symbol $P_{\{ r,s\}}$. 
	Furthermore, for each $\{r,0\}$ such that $f_r(\overline{z}_r)=0$ occurs in  $\phi^*$  a relation symbol  $P_{\{r,0\}}$  is introduced. 
	Finally for each pair of relation symbols $P_{\{r,s\}}$, $P_{\{s,t\}}$ we add the new relation symbol $P_{\{r,t\}}$.  All these new relation symbols have arity $|\overline{x}|$. In the following $s,r$ and $t$  range over $\{0,1,\ldots,k\}$ and atoms of the form $f_r(\overline{z}_r)=0$ are  also written as $f_r(\overline{z}_s)= f_0(\overline{z}_0)$.

Now for each $C^*_j$ in $\phi^*$, the formula $C'_j$ is defined by replacing the atoms $f_r(\overline{z}_r)=f_s(\overline{z}_s)$ and $f_r(\overline{z}_r)=0$ by the atoms  $P_{\{r,s\}}(\overline{x})$ and  $P_{\{r,0\}}(\overline{x})$, respectively. Since $\phi\in \SDHORN$ it follows immediately that 
	\[  \phi' :=  \exists \overline{P} \forall\overline{x} \bigwedge_{j}C'_j   \] 
	is a  \ESOHORN sentence. In order to guarantee the equivalence of $\phi^*$ and $\phi'$ (and hence the equivalence $\phi' $ and $\phi$) we add to $\bigwedge_{j}C'_j$ the following clauses:
\begin{equation}\label{ident}
 \bigwedge _{r,s,t} [(P_{\{r,s\}}(\overline{x})\wedge   P_{\{s,t\}}(\overline{x}))\rightarrow   P_{\{r,t\}}(\overline{x})],   
 \end{equation}
axiomatizing the transitivity of identity. We also replace the quantifier prefix  $\forall \overline{x}$ by $\forall \overline{x} \forall \overline{x}' $ and add the following clauses to the formula: 
\begin{equation}\label{funct2}
	 \bigwedge _{r} \bigwedge _{s} ((\overline{z}_r =\overline{z}_r'\rightarrow ( P_{\{r,s\}}(\overline{x})\leftrightarrow P_{\{r,s\}}(\overline{x}'))),  
	 \end{equation}
which ensures that the identities satisfied by the term $f_r(\overline{z}_r)$ are determined by the values of the variables $\overline{z}_r$. It is straightforward to check that $\phi'$, with the modifications in \eqref{ident} and \eqref{funct2}, is equivalent to a formula in \ESOHORN. We will next show that, for structures $\mA$ with $|A|\ge k+1$, where $k$ is the number of functions $f_i$ in $\phi^*$, the sentences  $\phi'$ and  $\phi^*$ are equivalent.  Note that, by modifying the behaviour of $\phi'$ in the finitely many structures of cardinality at most $k$, we can  find a sentence of  \ESOHORN that is logically equivalent to  $\phi^*$, and hence $\phi$.  

The proof of the implication from $\mA\models \phi^*$ to 
 $\mA\models \phi'$ is straightforward, hence we consider only the converse implication. Suppose then that $\mA$ is a structure with $|A|\ge k+1$, and $\mA\models \phi'$. We need to show that $\mA\models  \phi^*$. It suffices to find interpretations $g_i\colon A^{|\overline{z}_i|}\rightarrow A$ for the function symbols $f_i$, for $1\le i \le k$, such that   
\[(\mA,g_1,\ldots,g_k)\models \forall\overline{x}   \bigwedge_{j}C^*_j.\]
Since $\mA\models \phi'$, there are relations $S_{\{s,r\}}\subseteq A^{|\overline{x}|}$  such that 
\[(\mA,(S_{\{s,r\}})_{s,r})\models \forall \overline{x} \forall \overline{x}'  \bigwedge_{j}C'_j.\]
We define the  functions $g_i$ as follows:
$g_{i}(\overline{a})=l$, if $l$ is the smallest integer in $\{0,\ldots, i-1\}$ such that $\overline{a}$ can be extended to a tuple $\overline{a}'\in A^{|\overline{x}|}$ such that $\overline{a}'\in S_{\{i,l\}}$. Otherwise, $g_{i}(\overline{a})=i$.  The formula \eqref{funct2} ensures that $g_{i}$ is well-defined. Now, using the fact that the relations  $(S_{\{s,r\}})_{s,r}$ satisfy the formula \eqref{ident}, it is straightforward to  show that for all $\overline{a}\in A^{|\overline{x}|}$:
\begin{equation}\label{eqfin}
  (\mA,g_1,\ldots,g_k)\models_v f_s(\overline{z}_s)=f_r(\overline{z}_r) \Leftrightarrow 
(\mA,(S_{\{s,r\}})_{s,r} )\models_v   P_{\{s,r\}}(\overline{x}), 
\end{equation}
where $v$ is the assignment such that $v(\overline{x})=\overline{a}$. Using \eqref{eqfin},  $\mA\models \phi^*$ easily follows.  \end{proof}

\begin{cor}\label{k_variables}
$\SDHORN$ restricted to formulae with $k$ universal quantifiers captures a subclass of $\mathrm{TIME}(n^{2k})$.
\end{cor}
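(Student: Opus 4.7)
The plan is to refine the proof of Theorem \ref{SDHORN=PTIME} by tracking the number of universal first-order quantifiers through the translation constructed in Lemma \ref{SDHORN_in_ESOHORN}, and then combining this with the quantitative version of Gr\"adel's theorem for $\ESOHORN$.

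First, I recall the following sharpening of Theorem \ref{ESOHORN}, which is implicit in Gr\"adel's proof in \cite{gr92}: an $\ESOHORN$ sentence of the form $\exists \overline{P}\,\forall x_1\cdots\forall x_m\bigwedge_j C_j$ can be evaluated on a finite structure $\mA$ of cardinality $n$ in time $O(n^m)$, where the hidden constant depends only on the (fixed) formula. The reason is standard: after grounding the universal prefix one obtains a propositional Horn formula containing $O(n^m)$ clauses, and propositional Horn satisfiability (equivalently, computing the least fixed point of a Horn program) is solvable in time linear in the input.

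Second, I trace the construction in the proof of Lemma \ref{SDHORN_in_ESOHORN}. Starting from a $\SDHORN$ sentence
\[
\phi=\forall\overline{x}\exists y_1\ldots\exists y_\ell\Bigl(\bigwedge_i \dep[\overline{z}_i,y_i]\wedge\bigwedge_j C_j\Bigr)
\]
with $|\overline{x}|=k$, the translation first produces an \ESO\ sentence with $k$ universal first-order variables; then it introduces the auxiliary predicates $P_{\{r,s\}}$ together with the transitivity clauses \eqref{ident} (still under the original $\forall\overline{x}$) and the functionality clauses \eqref{funct2}, for which the universal prefix is enlarged to $\forall\overline{x}\,\forall\overline{x}'$ with $|\overline{x}'|=k$. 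All the resulting Horn clauses are then gathered under the single prefix $\forall\overline{x}\,\forall\overline{x}'$ of length $2k$, so the final $\ESOHORN$ sentence equivalent to $\phi$ has exactly $2k$ universal first-order quantifiers.

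Combining these two observations yields that every problem definable by a $\SDHORN$ sentence with $k$ universal quantifiers can be decided in time $O(n^{2k})$ on structures of size $n$, which is the desired bound. The main point to check carefully is the first step, namely the sharpened complexity bound for $\ESOHORN$ with $m$ universal quantifiers; once this refinement of Theorem \ref{ESOHORN} is recorded, the corollary follows directly from the explicit form of the translation of Lemma \ref{SDHORN_in_ESOHORN} together with the observation that adding the fixed-size clauses \eqref{ident} and \eqref{funct2} only doubles (not multiplies further) the number of universal first-order quantifiers.
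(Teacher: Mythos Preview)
Your proposal is correct and follows essentially the same route as the paper: trace the translation of Lemma~\ref{SDHORN_in_ESOHORN} to see that the universal prefix doubles from $k$ to $2k$, then invoke Gr\"adel's quantitative bound (the paper cites \cite[Corollary~4.2]{gr92}) that an $\ESOHORN$ sentence with $m$ universal quantifiers grounds to a propositional Horn formula of size $n^m$, solvable in linear time. Your exposition spells out the two steps a bit more explicitly, but the argument is the same.
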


\begin{proof}
Starting with a formula $\psi\in\SDHORN$ with $k$ universal quantifiers, we obtain a formula $\phi\in\ESOHORN$ with $\ell=2k$ universal quantifiers as in the proof of the preceding lemma.
As shown by \cite[Corollary 4.2]{gr92}, a formula $\phi$ from \ESOHORN can be transformed into an equivalent propositional Horn formula $\phi'$, such that the obtained formula has size $n^\ell$, where $n$ denotes the size of the model and $\ell$ the number of universal quantifiers in $\phi$. 
The claim now follows since evaluation of propositional Horn formulae is in linear time.
\end{proof}

It is worth noting that Lemma \ref{SDHORN_in_ESOHORN} does not hold without the built-in successor relation, e.g.,  the following $\SDHORN$-sentence
\[  \exists y_1\ldots \exists y_n \bigwedge _{1\le i<j\le n} \neg y_i=y_j ,\]
is not logically equivalent to any \ESOHORN-sentence since the properties definable in \ESOHORN without successor are closed under taking substructures  \cite{gr92}.

\subsection{\DHORN expresses \NP-complete problems}

In the rest of this section we turn to the logic \DHORN. We show that it is essentially as expressive as \D by proving that the following \NP-complete problem \DOMSET can be expressed in \DHORN. 
\[
\DOMSET = \set[7.5cm]{\enc{(V,E), k}}{there is a set $V' \subseteq V$, $|V'| \leq k$,  such that for every $v\in V\backslash V'$ there is a $u \in V'$ with $(u,v) \in E$}.
\]

\begin{thm}
$\DOMSET$ can be defined in \DHORN, even without successor.
\end{thm}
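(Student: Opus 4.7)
The plan is to exhibit a \DHORN-sentence defining \DOMSET directly. I take the vocabulary to consist of the binary edge relation $E$ and a unary predicate $C$ with the convention that $C$ picks out $k$ ``slot'' elements, so that a structure $\mA=(V,E,C)$ with $|C|=k$ encodes an instance $\langle (V,E),k\rangle$. The idea is to guess a function $f\colon V\to V$ whose image on $C$ will serve as the dominating set $V'$ (automatically of cardinality at most $|C|=k$), together with a ``witness'' function $h\colon V\to V$ picking, for each $v$, a slot $h(v)\in C$ such that $f(h(v))$ dominates $v$. In \DHORN-syntax, $f$ and $h$ become existential variables $y,g$ guarded by the dep atoms $\dep[c,y]$ and $\dep[v,g]$, where $c,v$ are universal.

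The proposed sentence is
\begin{equation*}
\phi \;=\; \forall v\,\forall c\,\forall x\,\exists y\,\exists g\,\bigl(\dep[c,y]\wedge \dep[v,g]\wedge C(g)\wedge (c\neq g\vee x\neq y\vee x=v\vee E(x,v))\bigr).
\end{equation*}
The dep atoms are admissible since their dependee tuples $(c)$ and $(v)$ consist only of universal variables; the $\dep[v,g]$ atom is essential, as it prevents $g$ from depending on $c$, which would make the main clause trivially satisfiable by forcing $c\neq g$ everywhere. The clause $C(g)$ contains the single positive atom $C(g)$ with existential $g$, which is within the Horn budget.

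The crux of the proof, and the step I expect to be the main obstacle, is the last clause, which is meant to enforce ``$f(h(v))$ dominates $v$'', i.e., after reading $y=f(c)$ and instantiating $c=h(v)$, the disjunction $y=v\vee E(y,v)$. A naive encoding violates Horn, since both $y=v$ and $E(y,v)$ are positive atoms containing the existential $y$. I sidestep this by introducing the additional universal variable $x$ and invoking the pointwise equivalence of $y=v\vee E(y,v)$ with $\forall x\,(x=y\to x=v\vee E(x,v))$. In the resulting clause $c\neq g\vee x\neq y\vee x=v\vee E(x,v)$, the existentials $g,y$ appear only inside the negated atoms $\neg(c=g)$ and $\neg(x=y)$, while the two positive atoms $x=v$ and $E(x,v)$ involve only the universal variables $x,v$; hence no positive atom contains an existential and the clause is Horn.

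Verifying correctness of $\phi$ is then routine. If $\mA\models\phi$, the dep atoms yield functions $f$ and $h$ with $y=f(c)$ and $g=h(v)$; the clause $C(g)$ gives $h(v)\in C$, and instantiating $c=h(v)$, $x=f(h(v))$ in the big clause forces $f(h(v))=v\vee E(f(h(v)),v)$, so $V'=\{f(c)\mid c\in C\}$ is a dominating set with $|V'|\le|C|=k$. Conversely, given a dominating set $V'$ of size $\le k$, pick any $f\colon V\to V$ with $f(C)\supseteq V'$ and any $h\colon V\to C$ such that $f(h(v))$ dominates $v$; all clauses are then satisfied. Notably, the argument uses neither a successor nor an order relation.
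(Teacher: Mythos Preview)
Your proof is correct, and it follows a genuinely different route from the paper's.

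The paper's construction works on the self-loop extension $G^*$ of $G$ and uses three existential variables: one, $y_0=g(x_0)$, picks a neighbour of $x_0$ (the clause $E(x_0,y_0)$), while the other two encode a \emph{bijection} $f$ via two copies $y_1=f(x_1)$, $y_2=f(x_2)$ together with the clauses $x_1=x_2\leftrightarrow y_1=y_2$; the size bound is then enforced by requiring $f(y_0)\in P$ for every chosen dominator $y_0$, so that injectivity of $f$ gives $|\{y_0\}|\le |P|=k$. The self-loops are needed because $E(x_0,y_0)$ is asserted outright.

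Your construction instead parameterises the dominating set directly as $f(C)$ and uses the second function $h\colon V\to C$ to pick the witnessing slot, so no bijection is required and the size bound $|f(C)|\le|C|=k$ is automatic. The key device that the paper does not use is the extra universal variable $x$: by rewriting the non-Horn condition $y=v\vee E(y,v)$ as $\forall x\,(x=y\to x=v\vee E(x,v))$, you push the existential $y$ into a negated atom and obtain a clause with \emph{no} positive atom containing an existential. As a bonus, the disjunct $x=v$ lets you dispense with the self-loop preprocessing entirely. What the paper's approach buys is a reusable pattern (the bijection trick) for certifying cardinality bounds within \DHORN; what yours buys is a shorter, self-contained sentence over the unmodified graph.
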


\begin{proof}
Consider the following formula
\begin{equation}\label{DS-Formel}
\phi = \begin{array}[t]{@{}r@{}l}
\forall x_{0}\forall x_{1}\forall x_{2}
\exists y_{0}\exists y_{1}\exists y_{2}
\bigl( &
\dep[x_{0},y_{0}] \\ &
{}\wedge \dep[x_{1},y_{1}] \wedge \dep[x_{2},y_{2}] \\ &
{}\wedge (x_{1}=x_{2}\rightarrow y_{1}=y_{2}) \wedge (y_{1}=y_{2}\rightarrow x_{1}=x_{2}) \\ &
{}\wedge E(x_{0},y_{0}) \wedge (y_{0}=x_{1}\rightarrow P(y_{1})
\bigr).
\end{array}
\end{equation}
We claim that, for $G=(V,E)$,  $\langle G,k\rangle \in\DOMSET \iff \langle G^*,k\rangle \in \DOMSET \allowbreak \iff (G^*,P)\models\phi$, where $P$ is an arbitrary unary relation that contains $k$ nodes and $G^*$ extends $G$ by self loops.

To see this, note that the atoms in the second and third line in (\ref{DS-Formel}) define a bijection $f$ under which $x_{1}$ is mapped to $y_{1}$ (this is analogous to the example on p.~51 in \cite{va07}). The rest of the formula ensures that every node $x_{0}$ is connected by an edge to some node $y_{0}$ and $y_{0}$ is in bijection with some element in $P$. Since there are only $k$ such elements, we express existence of a dominating set of cardinality $k$.
\end{proof}


%
%
%

\section{The case of open formulae}\label{case_open_formulas}
In this section we show that over successor structures the open $\SDHORN$-formulae correspond exactly to the negative fragment of $\ESOHORN$. 


Theorem~\ref{D=NP} shows that sentences of \D correspond to sentences of $\ESO$. Note that this result does not tell us anything about formulae of dependence logic with free variables. An upper bound for the complexity of formulae of \D is provided by the following result.
\begin{thm}[\cite{va07}]\label{DtoSIG}
Let $\tau$ be a vocabulary and $\varphi$ a  $\D[\tau]$-formula  with free variables $x_1,\dotsc, x_k$. Then
there is a $\tau\cup \{R\}$-sentence $\psi$ of $\ESO$, in which  $R$  appears only negatively,  such that for all models $\mA$ and teams $X$ with domain $\{x_1,\dotsc, x_k\}$:
\[ \mA\models_X \phi \iff (\mA,\rel(X))\models\psi . \]
\end{thm}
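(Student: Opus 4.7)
The plan is to translate $\varphi$ into \ESO by induction on its structure, producing for every subformula $\psi(\bar y)$ an \ESO-formula $\psi^\bullet$ in the vocabulary $\tau \cup \{R_\psi\}$ (with $R_\psi$ a fresh $|\bar y|$-ary relation symbol) such that $R_\psi$ occurs only negatively in $\psi^\bullet$ and, for every structure $\mA$ and team $X$ with $\dom(X) = \{\bar y\}$,
\[
\mA \models_X \psi \iff (\mA,\rel(X)) \models \psi^\bullet.
\]
Taking $R := R_\varphi$ at the top level and existentially prefixing all the auxiliary second-order symbols introduced during the induction will then yield the sentence required by the theorem. The guiding semantic insight is that, thanks to Theorem~\ref{Downward closure}, $R_\psi$ only needs to \emph{contain} $\rel(X)$ rather than coincide with it; this asymmetry is precisely what will keep $R_\psi$ in negative position throughout.

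\textbf{Construction.} For a first-order literal $\alpha$ I would set $\alpha^\bullet := \forall \bar y\,(R_\alpha(\bar y) \rightarrow \alpha)$. For a dependence atom $\dep[t_1,\dots,t_n]$ I would use
$\forall \bar y\,\forall \bar y'\,\bigl(R(\bar y) \wedge R(\bar y') \wedge \bigwedge_{i<n} t_i(\bar y)=t_i(\bar y') \rightarrow t_n(\bar y)=t_n(\bar y')\bigr)$,
and $(\neg \dep[\ldots])^\bullet := \forall \bar y\, \neg R(\bar y)$ to encode team-emptiness. Conjunctions reuse the same $R$. For $\psi_1 \vee \psi_2$ I would write
$\exists S_1 \exists S_2\,\bigl[\forall \bar y\,(R(\bar y) \rightarrow S_1(\bar y) \vee S_2(\bar y)) \wedge \psi_1^\bullet[S_1/R_{\psi_1}] \wedge \psi_2^\bullet[S_2/R_{\psi_2}]\bigr]$.
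For $\exists x\,\psi$ I would introduce a fresh Skolem function $F$ and a fresh team relation $S$ of arity $|\bar y|+1$ and write
$\exists F \exists S\,\bigl[\forall \bar y\,(R(\bar y) \rightarrow S(\bar y, F(\bar y))) \wedge \psi^\bullet[S/R_\psi]\bigr]$;
the universal case will replace the Skolem clause by $\forall \bar y\,\forall x\,(R(\bar y) \rightarrow S(\bar y, x))$, matching the semantics of $X(A/x)$.

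\textbf{Verification and main obstacle.} The polarity check will be mechanical: every newly introduced side clause places $R$ on the left of an implication, so $R$ occurs only negatively there, while each $R_\psi$ inside a recursive translation is negative by the induction hypothesis, and the auxiliary symbols $S, S_1, S_2, F$ are existentially bound and so do not affect the polarity of $R$. The real work lies in the correctness of the loose side conditions. In the $\exists$-case, for instance, I will have to show that \emph{some} $S$ containing $\{(\bar a, F(\bar a)) : \bar a \in \rel(X)\}$ satisfies $\psi^\bullet[S/R_\psi]$ iff the \emph{smallest} such $S$, namely $\rel(X(F/x))$, does. This reduces to downward closure (Theorem~\ref{Downward closure}) applied to the inductively translated subformula, combined with the routine observation that an \ESO-formula in which a relation symbol occurs only negatively is monotone decreasing in that symbol; the disjunction and universal cases are handled analogously. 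This downward-monotonicity invariant is the hinge of the whole induction, and once it is established the required equivalence for $\varphi$ itself drops out by assembling the inductive translations in order.
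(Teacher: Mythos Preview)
The paper does not supply its own proof of this theorem; it is quoted as a known result from V\"a\"an\"anen's book \cite{va07} and used as a black box. Your inductive translation is essentially the standard argument found there, and the overall strategy---building $\psi^\bullet$ by recursion, keeping the team relation on the left of every implication, and closing the loose ``$\supseteq$'' clauses via the negative-occurrence monotonicity---is correct and is exactly the mechanism that makes the result work.

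One technical wrinkle worth tightening: you let the arity of $R_\psi$ be $|\Fr(\psi)|$, and then say ``conjunctions reuse the same $R$.'' If $\Fr(\psi_1)\neq\Fr(\psi_2)$ the two relations have different arities and cannot literally be identified. The clean fix (and what the standard proof does) is to carry a single relation whose arity is the full current variable domain, letting it grow by one under each quantifier; Theorem~\ref{freevar} then ensures the extra coordinates are harmless. A second cosmetic point: your existential clause uses a second-order function symbol $F$. That is fine in this paper's setting (see the proof of Lemma~\ref{SOHORN_in_SDHORN}, where function quantification is used freely), but if you want pure relational $\ESO$ you would replace $F$ by its graph together with a totality clause---still keeping $R$ negative. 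Neither issue threatens the argument; with these adjustments your proof goes through.
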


In \cite{kova09} it was shown that also the converse holds.

\begin{thm}[\cite{kova09}]\label{Dformulas} Let $\tau$ be a vocabulary and  $R$ a  $k$-ary relation symbol such that   $R\notin \tau$. Then for every  $\tau\cup \{R\}$-sentence $\psi$ of $\ESO$, in which $R$ appears only negatively,  there is a
$\tau$-formula $\phi$ of \D  with free variables $x_1,\dotsc, x_k$  such that,  for all  $\mA$ and  $X\neq \emptyset$  with domain $\{x_1,\dotsc, x_k\}$:
\begin{equation}\label{new_eq}
 \mA\models_X \phi \iff (\mA,\rel(X))\models\psi .
\end{equation}
\end{thm}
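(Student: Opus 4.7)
The plan is to extend the classical $\Sigma^1_1$-to-\D translation underlying Theorem~\ref{D=NP} (i.e.\ Theorem~6.15 of \cite{va07}) with one further idea: in addition to simulating each existentially quantified second-order symbol $P_i$ of $\psi$ by an existential first-order variable guarded by a dependence atom, I also simulate the ``free'' relation symbol $R$ in the same way, and then \emph{link this simulation to the team $X$} by an extra clause. It is this link, together with the negativity of $R$, that makes the encoding sound.

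First, normalize $\psi$ to the prenex form $\exists P_1\cdots\exists P_m\,\forall\overline{y}\,\theta$ with $\theta$ quantifier-free and in NNF, and apply the term-flattening manoeuvre from the proof of Lemma~\ref{SOHORN_in_SDHORN} so that every occurrence of a relation symbol in $\theta$ has the shape $Q(\overline{w})$ for a tuple of pairwise distinct, universally quantified variables. Since $R$ is only negative, every $R$-literal becomes $\lnot R(\overline{w}_R^{(\ell)})$. Then, for each $P_i$ introduce a fresh existentially quantified variable $z_i$ guarded by $\dep[\overline{w}_i,z_i]$ and replace $P_i(\overline{w}_i)$ by the flat atom $z_i=0$ (and $\lnot P_i(\overline{w}_i)$ by $z_i\neq 0$); for each $R$-occurrence $\ell$ introduce a fresh existentially quantified $r_\ell$ guarded by $\dep[\overline{w}_R^{(\ell)},r_\ell]$, add the linking clause
\[
\overline{w}_R^{(\ell)}=\overline{x}\;\rightarrow\;r_\ell=0,
\]
and replace $\lnot R(\overline{w}_R^{(\ell)})$ by $r_\ell\neq 0$. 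This yields a \D-formula $\phi(\overline{x})$ of the shape $\forall\overline{y}\,\forall\overline{w}\,\exists\overline{z}\,\exists\overline{r}\,(\text{dep-atoms}\wedge\text{link-clauses}\wedge\theta^{*})$ whose matrix is purely first-order, hence flat by Theorem~\ref{FO}, so the team-semantic disjunctions inside $\theta^{*}$ automatically reduce to the pointwise disjunctions used by the original $\theta$.

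For the equivalence~\eqref{new_eq}, I would Skolemize in both directions. Given $(\mA,R)\models\psi$ with witness relations $P_i$, choose the tight Skolem functions $f_i(\overline{a})=0\iff\overline{a}\in P_i$ and $g_\ell(\overline{a})=0\iff\overline{a}\in R$; all dependence atoms and link-clauses then hold, and the matrix $\theta^{*}$ holds pointwise by construction. Conversely, given $\mA\models_X\phi$, read off $P_i:=\{\overline{a}:f_i(\overline{a})=0\}$ from the Skolem witnesses for the $z_i$, and for each $\ell$ let $\tilde R_\ell:=\{\overline{a}:g_\ell(\overline{a})=0\}$; the link-clause forces $\tilde R_\ell\supseteq\rel(X)=R$. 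The pointwise truth of $\theta^{*}$ says that $(\mA,\overline{P},\tilde R_1,\dotsc,\tilde R_N)$ satisfies the ``multi-$R$'' version of $\theta$ in which the $\ell$-th $R$-occurrence is interpreted by $\tilde R_\ell$; because $R$ appears only negatively, each $\lnot\tilde R_\ell(\overline{a})$ implies $\lnot R(\overline{a})$, so $(\mA,\overline{P},R)$ satisfies the original $\theta$ a fortiori, giving $(\mA,R)\models\psi$. The main obstacle I expect is precisely this one-sided Skolem constraint: the link only forces $g_\ell^{-1}(0)\supseteq R$, never equality, which is exactly why the same construction cannot be symmetrized to positive occurrences of $R$---matching the negativity condition in the dual statement, Theorem~\ref{DtoSIG}.
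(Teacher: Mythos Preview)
Your proposal is correct and is essentially the construction from \cite{kova09}, which the paper does not reprove for Theorem~\ref{Dformulas} but spells out in the proof of the analogous Theorem~\ref{SDHORN-formulas}. There the rewriting is organized slightly differently: one first passes, still at the \ESO level, to the equivalent sentence
\[
\psi' \;:=\; \exists R'\bigl(\forall\overline{x}\,(\neg R(\overline{x})\vee R'(\overline{x}))\wedge \psi[R'/R]\bigr),
\]
using negativity of $R$ once to justify $\psi\equiv\psi'$, then applies the standard $\ESO$-to-$\D$ translation, and finally replaces the single surviving literal $\neg R(\overline{x})$ by $\bigvee_{i}\neg z_i=x_i$. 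Your version introduces a separate simulated copy $\tilde R_\ell$ (via $r_\ell$) for each occurrence of $R$ and invokes negativity per occurrence in the soundness direction. This is the same construction inlined: your link clause $\overline{w}_R^{(\ell)}=\overline{x}\rightarrow r_\ell=0$ is exactly what the translation produces from $\neg R(\overline{w})\vee R'(\overline{w})$ after the substitution $\neg R(\overline{w})\mapsto\bigvee_j\neg x_j=w_j$, and the $\ESO$-to-$\D$ translation would in any case split $R'$ into one Skolem variable per argument tuple.
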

Theorem \ref{Dformulas} shows that formulae of dependence logic correspond in a precise way to the negative fragment of $\ESO$ and are therefore very expressive. We will next proceed to generalize Theorems \ref{DtoSIG} and \ref{Dformulas} to  open $\SDHORN$ formulae and the negative fragment of $\ESOHORN$. We consider first the analogue of Theorem \ref{DtoSIG}. 


\begin{lem}\label{lemma1} Let $\tau$ be a vocabulary and $\varphi\in \SDHORN[\tau]$  with free variables $z_1,\dotsc, z_k$. Then
there is a $\tau\cup \{R\}$-sentence $\psi$ of $\SDHORN$ in which $R$  appears only negatively, such that for all models $\mA$ and teams $X$ with domain $\{z_1,\dotsc, z_k\}$:
\begin{equation}\label{rep}
\mA\models_X \phi \iff (\mA,\rel(X))\models\psi . 
\end{equation}
\end{lem}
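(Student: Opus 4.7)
The plan is to form $\psi$ by universally quantifying the free variables $\overline{z}=z_1,\ldots,z_k$ of $\varphi$ at the front and guarding each Horn clause of $\varphi$ by a fresh negative literal $\neg R(\overline{z})$. Writing
\[
\varphi = \forall \overline{x}\, \exists \overline{y} \Big( \bigwedge_i \dep[\overline{w}_i, y_i] \wedge \bigwedge_j C_j \Big),
\]
where each $\overline{w}_i \subseteq \overline{x}\cup\overline{z}$ (the natural extension of Definition~\ref{def_dhorn} to open formulae), I propose
\[
\psi = \forall \overline{z}\, \forall \overline{x}\, \exists \overline{y} \Big( \bigwedge_i \dep[\overline{w}_i, y_i] \wedge \bigwedge_j (\neg R(\overline{z}) \vee C_j) \Big).
\]
That $\psi$ is an $\SDHORN$-sentence is routine: the variables of each $\overline{w}_i$ are all universally quantified in $\psi$, each $\neg R(\overline{z})\vee C_j$ is still a Horn clause (only one extra negative literal is added), the existential variables $y_i$ continue to occur only in atoms of the form $y_i=0$ or $y_i=y_j$, and the new symbol $R$ occurs only in $\neg R(\overline{z})$, hence only negatively.

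The semantic picture to keep in mind is that the prefix $\forall\overline{z}\,\forall\overline{x}$ turns $\{\emptyset\}$ into the full team $T$ of assignments to $\overline{z}\cup\overline{x}$ over $\mA$, and with $R=\rel(X)$ the ``$R$-rows'' $\{s\in T:s(\overline{z})\in R\}$ are precisely the team $X(A/\overline{x})$ used when evaluating $\varphi$. Each guarded clause $\neg R(\overline{z})\vee C_j$ is first-order and hence flat by Theorem~\ref{FO}, so it holds on the final team iff $C_j$ holds pointwise on every $R$-row. Consequently $(\mA,R)\models\psi$ reduces to the existence of functions $G_i$ on $T$ that respect $\dep[\overline{w}_i,y_i]$ and make each $C_j$ true on the $R$-rows.

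For $(\Rightarrow)$ I would start from witnesses $F_i\colon X(A/\overline{x})\to A$ for $\mA\models_X\varphi$; by the dependence atoms each $F_i$ factors through the projection $\pi_{\overline{w}_i}$ as $\overline{F}_i$ on $\pi_{\overline{w}_i}(X(A/\overline{x}))$, which I extend arbitrarily (by a constant, say) to $\overline{G}_i$ on $\pi_{\overline{w}_i}(T)$ and set $G_i(s):=\overline{G}_i(s(\overline{w}_i))$. Then $G_i$ satisfies $\dep[\overline{w}_i,y_i]$ on all of $T$, agrees with $F_i$ on the $R$-rows (so the $C_j$ are inherited there), and on the complement the literal $\neg R(\overline{z})$ discharges every guarded clause. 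For $(\Leftarrow)$ I would simply restrict any witnesses $G_i$ for $\psi$ to $X(A/\overline{x})\subseteq T$: dependence carries down to subteams, and since $R(s(\overline{z}))$ holds for every $s\in X(A/\overline{x})$ the guard $\neg R(\overline{z})$ is false there, forcing each $C_j$ and giving $\mA\models_X\varphi$. The edge case $X=\emptyset$ is immediate on both sides.

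The only point needing thought is where to insert the guard. Putting $\neg R(\overline{z})$ at the top as $\neg R(\overline{z})\vee\varphi$ destroys the $\SDHORN$ normal form, while folding it into the dependence atoms breaks the Horn shape. The right move is to push the guard onto each Horn clause individually, which is legal precisely because those clauses are flat and so the team-semantic disjunction collapses to a pointwise implication. Once this is observed, the rest of the argument is bookkeeping.
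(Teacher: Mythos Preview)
Your proposal is correct and produces exactly the same sentence $\psi$ as the paper. The difference is only in how the equivalence \eqref{rep} is justified: the paper first writes $\psi':=\forall\overline{z}(\neg R(\overline{z})\vee\varphi)$, invokes a general result (Proposition~5.4 of \cite{ADJK}) for the semantic equivalence with the team condition, and then massages $\psi'$ into your $\psi$ via a known quantifier/disjunction-commutation lemma (Lemma~3.2 of \cite{ADJK}) together with flatness. You instead argue the equivalence directly by extending/restricting the witness functions for the existential block. Both arguments pivot on the same insight you isolate in your last paragraph---that the guard $\neg R(\overline{z})$ must be pushed onto each first-order clause so that flatness applies. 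Your route is more self-contained; the paper's is shorter because it outsources the semantic work to the cited lemmas.
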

\begin{proof} Suppose that $\varphi$ is of the form
\[
		\forall\overline{x}\exists\overline{y}(\bigwedge_i \dep[\overline{w}_i,y_i] \wedge \bigwedge_j C_j).
	\]
As generally showed in Proposition 5.4 in \cite{ADJK}, $\varphi$ can be translated to a \D sentence $\psi'$ satisfying the equivalence  \eqref{rep}: 
\[ \psi':= \forall \overline{z} (\neg R(\overline{z})\vee \varphi). \]
Therefore it suffices to show that $\psi'$ is equivalent to some $\SDHORN$ sentence $\psi$. We may assume that the variables in $\overline{x}$ and $\overline{y}$ do not appear in $\neg R(\overline{z})$, hence $\psi'$ is equivalent to 
\begin{equation}\label{formula1}
		\forall \overline{z}\overline{x}\exists\overline{y}((\bigwedge_i \dep[\overline{w}_i,y_i] \wedge \bigwedge_j C_j)\vee \neg R(\overline{z})).
\end{equation}
The proof of Lemma 3.2 in \cite{ADJK} shows that the following subformula of \eqref{formula1} 
\[ \exists\overline{y}((\bigwedge_i \dep[\overline{w}_i,y_i] \wedge \bigwedge_j C_j)\vee \neg R(\overline{z}))
	\]
is equivalent to 
\[ \exists\overline{y}(\bigwedge_i \dep[\overline{w}_i,y_i] \wedge ( \bigwedge_j C_j\vee \neg R(\overline{z}))).
	\]
Therefore, the sentence $\psi'$ is logically equivalent to the sentence
\begin{equation}\label{formula2}
 \forall \overline{z}\overline{x}\exists\overline{y}(\bigwedge_i \dep[\overline{w}_i,y_i] \wedge ( \bigwedge_j C_j\vee \neg R(\overline{z})))
	.
\end{equation}
Note that since $( \bigwedge_j C_j\vee \neg R(\overline{z}))$ is first-order, Theorem \ref{FO} implies  that  \eqref{formula2} is equivalent to the  $\SDHORN$ sentence $\psi$
\[ \psi= \forall \overline{z}\overline{x}\exists\overline{y}(\bigwedge_i \dep[\overline{w}_i,y_i] \wedge \bigwedge_j C'_j),\]
where $C'_j $ is $C_j\vee \neg R(\overline{z})$.
\end{proof}
By combining Lemma \ref{lemma1} and our translation from $\SDHORN$ to $\ESOHORN$  the following analogue of Theorem \ref{DtoSIG} follows.

\begin{thm}
Let $\tau $ be a vocabulary such that  $\{0,\max,R \}\subseteq \tau$, and $\varphi$ a $\SDHORN[\tau]$-formula with free variables $z_1,\dotsc, z_k$. Then
there is a $\tau\cup \{R\}$-sentence $\psi$ of $\ESOHORN$, in which $R$  appears only negatively, such that for all sufficiently large models $\mA$ and teams $X\neq \emptyset$ with domain $\{z_1,\dotsc, z_k\}$:
\[ \mA\models_X \phi \iff (\mA,\rel(X))\models\psi . \]
\end{thm}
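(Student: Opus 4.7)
The strategy is to compose the two constructions just established: Lemma \ref{lemma1}, which turns a free-variable $\SDHORN$-formula into a $\SDHORN$-sentence over $\tau \cup \{R\}$ in which $R$ occurs only negatively, and Lemma \ref{SDHORN_in_ESOHORN}, which converts any $\SDHORN$-sentence into an equivalent $\ESOHORN$-sentence on sufficiently large structures. The only new observation required is that the second translation preserves the polarity of the ``input'' symbol $R$.

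First, I apply Lemma \ref{lemma1} to $\varphi$ to obtain a $(\tau \cup \{R\})$-sentence $\psi'$ of $\SDHORN$ such that $R$ occurs only negatively in $\psi'$ and
\[ \mA \models_X \varphi \iff (\mA, \rel(X)) \models \psi' \]
holds for every team $X$ with domain $\{z_1, \dotsc, z_k\}$. Explicitly, $\psi'$ has the shape $\forall \overline{z}\overline{x} \exists \overline{y} (\bigwedge_i \dep[\overline{w}_i, y_i] \wedge \bigwedge_j C'_j)$ where each $C'_j$ is $C_j \vee \neg R(\overline{z})$.

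Second, I run $\psi'$ through the translation given in the proof of Lemma \ref{SDHORN_in_ESOHORN}. This yields an $\ESOHORN$-sentence $\psi$ equivalent to $\psi'$ on all models of size at least $k'+1$, where $k'$ is the number of existentially quantified variables of $\psi'$. The construction introduces fresh relation symbols $P_{\{r,s\}}$ and $P_{\{r,0\}}$ coding the functional identities between the Skolem terms for $\overline{y}$, and rewrites each clause $C'_j$ by replacing the atoms $y_r = y_s$ and $y_r = 0$ with $P_{\{r,s\}}(\overline{x})$ and $P_{\{r,0\}}(\overline{x})$, respectively. All other literals of $C'_j$ are copied verbatim. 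The auxiliary axioms \eqref{ident} and \eqref{funct2} only mention the new $P$-symbols and universally quantified first-order variables.

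The key point is that $R$ is never among the symbols being replaced or axiomatized, so every occurrence of $R$ in $\psi$ is one of the negative occurrences it already had in $\psi'$. Combined with Lemma \ref{SDHORN_in_ESOHORN}, this gives
\[ \mA \models_X \varphi \iff (\mA, \rel(X)) \models \psi' \iff (\mA, \rel(X)) \models \psi \]
for every nonempty team $X$ with domain $\{z_1, \dotsc, z_k\}$ and every sufficiently large $\mA$, which is the claim. The only potentially delicate step is the preprocessing (analogous to the one used in the proof of Lemma \ref{SOHORN_in_SDHORN}) that ensures each function symbol is applied to a fixed tuple of pairwise distinct universal variables; a direct inspection confirms that this rewriting introduces only equalities between universal variables and never creates new $R$-atoms, so the polarity of $R$ is preserved throughout.
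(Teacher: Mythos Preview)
Your proposal is correct and follows essentially the same route as the paper: apply Lemma~\ref{lemma1} to obtain a $\SDHORN$-sentence with $R$ only negative, then push it through the translation of Lemma~\ref{SDHORN_in_ESOHORN} and observe that this translation touches only the equality atoms involving the $y_i$, so the polarity of $R$ is preserved. Your final paragraph about a preprocessing step analogous to Lemma~\ref{SOHORN_in_SDHORN} is unnecessary here---the translation in Lemma~\ref{SDHORN_in_ESOHORN} starts from a $\SDHORN$-sentence where each $f_i$ already has its fixed argument tuple $\overline{z}_i$, so no such normalization occurs---but this does not affect the argument.
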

\begin{proof} 
Note that if $R$ appears only negatively in  $\phi$, then it will also only appear negatively in $\phi'$, as defined in  Lemma \ref{SDHORN_in_ESOHORN}, and these sentences are equivalent for large enough structures. 
\end{proof}

Next we show that the analogue of Theorem \ref{Dformulas} also holds.

\begin{thm}\label{SDHORN-formulas} Let $\tau$ be a vocabulary such that  $\{0,\max,R \}\subseteq \tau$, and $R$ a  $k$-ary relation symbol such that $R\notin \tau$. Then for every  $\tau\cup \{R\}$-sentence $\psi$ of $\ESOHORN$, in which $R$ appears only negatively,  there is a $\tau$-formula $\phi$ of $\SDHORN$ with free variables $z_1,\dotsc, z_k$  such that,  for all  $\mA$ and  $X\neq \emptyset$ with domain $\{z_1,\dotsc, z_k\}$:
\begin{equation}\label{neweq}
 \mA\models_X \phi \iff (\mA,\rel(X))\models\psi .
\end{equation}
\end{thm}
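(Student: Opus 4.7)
The plan is to follow the structure of Lemma~\ref{lemma1} in reverse: start with the $\ESOHORN$ sentence $\psi$, turn it into an $\SDHORN$ sentence by Lemma~\ref{SOHORN_in_SDHORN}, and then internalize the negative occurrences of $R$ by letting the free variables $\bar{z}$ represent an arbitrary tuple of the team $X$.

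First, I would apply the construction in the proof of Lemma~\ref{SOHORN_in_SDHORN} to the $\tau\cup\{R\}$-sentence $\psi$, producing an equivalent $\SDHORN[\tau\cup\{R\}]$ sentence
\[
\psi^* \;=\; \forall\overline{x}\,\exists\overline{y}\Bigl(\bigwedge_i \dep[\overline{w}_i,y_i]\ \wedge\ \bigwedge_j C^*_j\Bigr).
\]
A routine inspection of that proof shows that the transformation only touches the second-order variables $P_i$ (replacing them by existential $y_i$ guarded by dependence atoms) and that every extra clause it introduces (cf.~formulas \eqref{claim-so-d:dup} and \eqref{claim-so-d:dup2}) is Horn and $R$-free; hence $R$ still occurs only negatively in $\psi^*$, in certain literals of the form $\neg R(\overline{t}(\overline{x}))$.

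Second, I would build $\phi$ from $\psi^*$ by replacing every literal $\neg R(\overline{t}(\overline{x}))$ with the disjunction $\bigvee_{i=1}^{k}(z_i\neq t_i)$, where $z_1,\dotsc,z_k$ are fresh variables. Because each such replacement adds only \emph{negative} first-order literals to the clause containing $\neg R(\overline{t})$, the Horn condition in the sense of Definition~\ref{def_dhorn} is preserved, all occurrences of existentially quantified variables remain of the form $y_i=0$ or $y_i=y_j$, and $R$ no longer appears. Thus $\phi\in\SDHORN[\tau]$ with free variables exactly $z_1,\dotsc,z_k$.

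For the equivalence \eqref{neweq}, fix a structure $\mA$ and a nonempty team $X$ with domain $\{z_1,\dotsc,z_k\}$, and set $R:=\rel(X)$. The key observation is that the dependence atoms $\dep[\overline{w}_i,y_i]$ force every witness $F_i$ for $y_i$ to factor through $\overline{w}_i\subseteq\overline{x}$, so $F_i(s(\overline{a}/\overline{x}))=g_i(\overline{a}\!\restriction_{\overline{w}_i})$ does not depend on $s\in X$. Consequently, for any fixed tuple $\overline{a}\in A^{|\overline{x}|}$, the only part of any clause $C''_j$ whose truth value varies with $s$ is the replacement $\bigvee_i(z_i\neq t_i)$. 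Hence $\mA\models_X\phi$ iff there exist $g_i$'s such that, for every $\overline{a}$ and every clause $C''_j$, either some non-replaced literal of $C''_j$ is classically true (in which case the clause holds for all $s$ simultaneously), or every $s\in X$ satisfies $s(\overline{z})\neq\overline{t}(\overline{a})$, i.e.\ $\overline{t}(\overline{a})\notin\rel(X)$. Reading the $g_i$'s as characteristic functions of relations $P_i$, this is exactly $(\mA,\rel(X))\models\psi^*$, and hence $(\mA,\rel(X))\models\psi$.

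The main obstacle is this last step: one must argue that the team-semantic, flat evaluation of $\bigvee_i(z_i\neq t_i)$ on the enlarged team $X(A/\overline{x})(\overline{F}/\overline{y})$ correctly captures the global condition $\overline{t}(\overline{a})\notin\rel(X)$, and not merely the weaker ``diagonal'' condition $\overline{t}(\overline{a})\neq s(\overline{z})$ at the single assignment currently being inspected. The reconciliation uses two ingredients together: (i) the dependence-atom constraint makes the non-replaced literals of each clause constant across $X$ once $\overline{a}$ is fixed, so whenever they all fail the disjunction $\bigvee_i(z_i\neq t_i)$ must hold at \emph{every} $s\in X$ simultaneously; and (ii) $\overline{z}$ is disjoint from $\overline{x}\cup\overline{y}\cup\bigcup\overline{w}_i$, so as $s$ ranges over $X$ the tuple $s(\overline{z})$ ranges over all of $\rel(X)$. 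Together these force the replacement to encode $\overline{t}(\overline{a})\notin\rel(X)$ uniformly, which is the content of the theorem.
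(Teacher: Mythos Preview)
Your overall strategy---translate $\psi$ to $\SDHORN$ via Lemma~\ref{SOHORN_in_SDHORN} and then replace each $\neg R(\overline{t})$ by $\bigvee_i z_i\neq t_i$---is close to the paper's, but there is a genuine gap in the forward direction of your equivalence argument when a single clause contains \emph{more than one} negative $R$-literal. Suppose a clause $C^*_j$ is $\neg R(\overline{t}_1)\vee\neg R(\overline{t}_2)\vee L$ with $L$ an $R$-free disjunction. After your substitution it becomes
\[
\bigvee_i\bigl(z_i\neq (t_1)_i\bigr)\ \vee\ \bigvee_i\bigl(z_i\neq (t_2)_i\bigr)\ \vee\ L.
\]
Fix $\overline{a}$ so that $L$ fails. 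Flatness then demands that \emph{every} $s\in X$ satisfy $s(\overline{z})\neq\overline{t}_1(\overline{a})$ \emph{or} $s(\overline{z})\neq\overline{t}_2(\overline{a})$. If $\overline{t}_1(\overline{a})\neq\overline{t}_2(\overline{a})$ this is trivially true for every $s$, even when both tuples lie in $\rel(X)$; yet in $(\mA,\rel(X))$ the original clause fails because both $R$-literals are false. So $\mA\models_X\phi$ does \emph{not} imply $(\mA,\rel(X))\models\psi$ in general. Your key observation ``(i) the non-replaced literals are constant across $X$'' is correct, but the conclusion you draw from it---that failure of those literals forces $\overline{t}(\overline{a})\notin\rel(X)$---only follows when there is a single replacement in the clause.

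The paper's proof neutralises exactly this issue by a preprocessing step before translating to $\SDHORN$: it introduces a fresh existentially quantified relation $R'$, replaces every occurrence of $R$ in $\psi$ by $R'$, and adds the single clause $\neg R(\overline{x})\vee R'(\overline{x})$ with a fresh tuple $\overline{x}$ of universally quantified variables. After this, $R$ occurs exactly once, negatively, with simple variable arguments, so the substitution $\neg R(\overline{x})\mapsto\bigvee_i z_i\neq x_i$ is applied to a unique clause containing a single $R$-literal, and your flatness argument then goes through verbatim. In short, your proof becomes correct once you insert this $R'$ step; without it the direct replacement is unsound.
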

\begin{proof} 
Suppose $\psi\in \ESOHORN[\tau\cup \{R\}]$ is of the form
\[
\exists P_1 \dots \exists P_k  \forall \overline y \bigwedge_i C_i.
\]
It is easy to check that $\psi$ is logically equivalent to $\psi'$
\[\psi':= \exists R'(\forall \overline{x}(\neg R(\overline{x})\vee R'(\overline{x}))\wedge \psi [R'/R]),  \]
  where in $\psi [R'/R]$ all occurrences of $R$ are replaced by $R'$. Now $\psi'$ can  be transformed into  
\begin{equation}\label{quasi}
\varphi^*= \exists R'\exists P_1 \dots \exists P_k \forall  \overline x\forall \overline y (\bigwedge_i C_i[R'/R]\wedge  (\neg R(\overline{x})\vee R'(\overline{x}))).
\end{equation}
By the assumption that $R$ has only negative occurrences in $\psi$ it follows that the sentence \eqref{quasi} is also in $\ESOHORN$. Now we define the formula $\phi (z_1,\ldots,z_k)$ by first translating the sentence \eqref{quasi} into $\SDHORN$ as in Lemma \ref{SOHORN_in_SDHORN}, and then  replacing the subformula $\neg R(\overline{x})$ by
$\vee _{1\le i\le k} \neg z_i=x_i$. We note first that $\phi\in \SDHORN$. Furthermore, since the way $\phi$ is obtained from sentence \eqref{quasi} is the same as the translation given in \cite{kova09}, the formula  $\phi (z_1,\ldots,z_k)$ is as wanted.
\end{proof}

We say that a class $D$ of structures is downwards closed with respect to $R$ if for every structure $\mA$ the following holds: 
if $\mA\in D$ and 
$Q\subseteq R^{\mA}$ then $\mA'\in D$, where $\mA'$ arises from $\mA$ by replacing  $R^{\mA}$ by $Q$. As far as we know, it is an open question whether the $R$-negative fragment of $\ESOHORN$ can define all $\PTIME$-properties of successor structures which are downwards closed with respect to $R$. Note that for $\ESO$ and $\NP$ the analogous result holds, and hence the open formulas of dependence logic correspond exactly to the downwards closed $\NP$ properties.

\section{Conclusion}

Inspired by Gr\"adel's second-order characterization of \PTIME, we have studied two fragments of dependence logic \D. While the first restriction to Horn formulae still yields the full power of \NP, the more restricted type of Horn formulae captures the class \PTIME. Furthermore we showed that 
the open $\SDHORN$-formulae correspond exactly to the negative fragment of $\ESOHORN$. 



We conclude with the following open questions:
\begin{enumerate}
\item We showed that the model checking problem for the
fragment of $\SDHORN$ with $k$ universal quantifiers can be solved in TIME$(n^{2k})$ (see Corollary \ref{k_variables}). It would be interesting to prove a converse inclusion, i.e., that TIME$(n^{k})$ can be captured by $\SDHORN$ with $k'$ universal quantifiers, for some constant $k'>k$.
\item It would be interesting to know whether the $R$-negative fragment of $\ESOHORN$ (and the open $\SDHORN$ formulae) capture exactly all $\PTIME$ properties of successor structures which are downward closed with respect to $R$.

\end{enumerate}




\bibliographystyle{plain}
\bibliography{thi-hannover,Juha}

\begin{thebibliography}{10}

\bibitem{AbVa}
Samson Abramsky and Jouko V{\"a}{\"a}n{\"a}nen.
\newblock From {IF} to {BI}.
\newblock {\em Synthese}, 167(2):207--230, 2009.

\bibitem{arebkovo11}
Arnaud Durand, Johannes Ebbing, Juha Kontinen, and Heribert Vollmer.
\newblock Dependence logic with a majority quantifier.
\newblock In Supratik Chakraborty and Amit Kumar, editors, {\em FSTTCS},
  volume~13 of {\em LIPIcs}, pages 252--263. Schloss Dagstuhl - Leibniz-Zentrum
  fuer Informatik, 2011.

\bibitem{ADJK}
Arnaud Durand and Juha Kontinen.
\newblock Hierarchies in dependence logic.
\newblock {\em ACM Transactions on Computational Logic (TOCL)}, 13(4):31, 2012.

\bibitem{en11}
Fredrik Engstr{\"o}m.
\newblock Generalized quantifiers in dependence logic.
\newblock {\em Journal of Logic, Language and Information}, 21:299--324, 2012.

\bibitem{fag74}
Ronald Fagin.
\newblock Generalized first-order spectra and polynomial time recognizable
  sets.
\newblock In R.~Karp, editor, {\em Complexity of Computations}, pages 43--73.
  American Mathematical Society, Providence, RI, 1974.

\bibitem{Galliani12}
Pietro Galliani.
\newblock Inclusion and exclusion dependencies in team semantics - on some
  logics of imperfect information.
\newblock {\em Ann. Pure Appl. Logic}, 163(1):68--84, 2012.

\bibitem{gr92}
Erich Gr{\"a}del.
\newblock Capturing complexity classes by fragments of second-order logic.
\newblock {\em Theor. Comput. Sci.}, 101:35--57, July 1992.

\bibitem{vagr10}
Erich Grädel and Jouko Väänänen.
\newblock Dependence and independence.
\newblock {\em Studia Logica}, 101(2):399--410, 2013.

\bibitem{he61}
Leon Henkin.
\newblock Some remarks on infinitely long formulas.
\newblock In {\em Infinitistic Methods}, Proceedings Symposium Foundations of
  Mathematics, pages 167--183, Warsaw, 1961. Pergamon.

\bibitem{hisa89}
Jaakko Hintikka and Gabriel Sandu.
\newblock Informational independence as a semantical phenomenon.
\newblock In Jens~Erik Fenstad, Ivan~Timofeevich Frolov, and Risto Hilpinen,
  editors, {\em Logic, Methodology and Philosophy of Science VIII}, volume 126,
  pages 571--589. Elsevier, Amsterdam, 1989.

\bibitem{2010Jarmo}
Jarmo Kontinen.
\newblock Coherence and computational complexity of quantifier-free dependence
  logic formulas.
\newblock {\em Studia Logica}, 101(2):267--291, 2013.

\bibitem{kova09}
Juha Kontinen and Jouko V\"{a}\"{a}n\"{a}nen.
\newblock On definability in dependence logic.
\newblock {\em J. of Logic, Lang. and Inf.}, 18(3):317--332, 2009.

\bibitem{va07}
Jouko V{\"a}{\"a}n{\"a}nen.
\newblock {\em Dependence logic: A new approach to independence friendly
  logic}.
\newblock Number~70 in London Mathematical Society student texts. Cambridge
  University Press, 2007.

\bibitem{yaar}
Fan Yang.
\newblock Expressing second-order sentences in intuitionistic dependence logic.
\newblock {\em Studia Logica}, 101(2):323--342, 2013.

\end{thebibliography}


\appendix
\end{document}